\newcommand\blfootnote[1]{%
\begingroup
\renewcommand\thefootnote{}\footnote{#1}%
\addtocounter{footnote}{-1}%
\endgroup
}
\newtheorem{lemma}{Lemma}
\DeclareMathOperator{\diag}{diag}
\begin{document}

\setlength{\textfloatsep}{2pt}

\title{Cluster Index Modulation for Reconfigurable Intelligent Surface-Assisted mmWave Massive MIMO
\vspace{-0.25ex}
}

\author{Mahmoud~Raeisi,~\IEEEmembership{Student Member,~IEEE},
        Asil~Koc,~\IEEEmembership{Graduate Student Member,~IEEE},
        Ibrahim~Yildirim,~\IEEEmembership{Graduate Student Member,~IEEE},
        Ertugrul~Basar,~\IEEEmembership{Fellow,~IEEE}, and
        Tho Le-Ngoc,~\IEEEmembership{Life Fellow,~IEEE
        \vspace{-7ex}
        }
\thanks{Manuscript received 23 July 2022; revised 2 December 2022 and 15 April 2023; accepted 15 June 2023. This work was supported by the Scientific and Technological Research Council of Turkey (TÜBİTAK) under Grants 120E401 and BİDEB 2214. The associate editor coordinating the review of this article and approving it for publication was Dr. Daniel So. (\textit{Corresponding author: Ertugrul Basar.})}
\thanks{M. Raeisi, I. Yildirim, and E. Basar are with the Communications Research and Innovation Laboratory (CoreLab), Department of Electrical and Electronics Engineering, Koç University, Sariyer, Istanbul 34450, Turkey. (e-mail: mraeisi19@ku.edu.tr; ibrahimyildirim19@ku.edu.tr; ebasar@ku.edu.tr)}% <-this % stops a space
\thanks{I. Yildirim is also with the Faculty of Electrical and Electronics Engineering, Istanbul Technical University, Istanbul 34469, Turkey. (e-mail: yildirimib@itu.edu.tr)}
\thanks{I. Yildirim, A. Koc, and T. Le-Ngoc are with Department of Electrical and Computer Engineering, McGill University, Montreal, QC, Canada. (e-mail: ibrahim.yildirim@mail.mcgill.ca; asil.koc@mail.mcgill.ca; tho.le-ngoc@mcgill.ca)}

% <-this % stops a space
%\thanks{Manuscript received April 19, 2005; revised August 26, 2015.}
}

\markboth{IEEE TRANSACTIONS ON WIRELESS COMMUNICATIONS}
{Shell \MakeLowercase{\textit{et al.}}: Bare Demo of IEEEtran.cls for IEEE Journals}

\maketitle

\begin{abstract}
In this paper, we propose a transmission mechanism for a reconfigurable intelligent surface (RIS)-assisted millimeter wave (mmWave) system based on cluster index modulation (CIM), named best-gain optimized cluster selection CIM (BGCS-CIM).
The proposed BGCS-CIM scheme considers effective cluster power gain and spatial diversity gain obtained by the additional paths within the indexed cluster to construct an efficient codebook. We also integrate the proposed scheme into a practical system model to create a virtual path between transmitter and receiver where the direct link has been blocked. Thanks to the designed whitening filter, a closed-form expression for the upper bound on the average bit error rate (ABER) is derived and used to validate the simulation results. It has been shown that the proposed BGCS-CIM scheme outperforms the existing benchmarks thanks to its higher effective cluster gain, spatial diversity of indexed clusters, and lower inter-cluster interference.
\vspace{-1ex}
\end{abstract}

\begin{IEEEkeywords}
Millimeter wave, reconfigurable intelligence surface, cluster index modulation, massive MIMO, average bit error rate.
\end{IEEEkeywords}

\IEEEpeerreviewmaketitle

\vspace{-4ex}
\section{Introduction}
\IEEEPARstart{T}{he} intriguing applications of next-generation wireless networks demand more bandwidth, which is already scarce in the sub-6 GHz communication band \cite{9386246}. A promising solution is adopting the millimeter wave (mmWave) band, which spans from 30 GHz to 300 GHz \cite{9386246,wang2020joint}. On the other hand, mmWave signals suffer from high path loss and attenuation. Thanks to the short wavelength of mmWave signals, it is possible to pack a large antenna array and form a high directivity beam to compensate for the severe path loss  \cite{wang2020joint,wang2020intelligent}. Nevertheless, high directivity beams are vulnerable to blockage, especially in a dense urban area or indoor environment. Reconfigurable intelligent surfaces (RISs) can be installed in the environment to handle the blockage problem and increase the coverage area by providing additional transmission paths \cite{wang2020joint, wang2020intelligent, ying2020gmd, basar2019wireless, basar2021reconfigurable}. Another problem encountered in mmWave transmission compared to sub-6 GHz systems is its higher hardware cost and power consumption \cite{guo2021joint, 9687593}. These drawbacks necessitate wireless researchers to consider intelligent and efficient schemes that decrease the cost and power consumption. Moreover, index modulation (IM) is considered a promising solution to obtain a lower hardware cost with a reduced number of radio-frequency (RF) chains and enhanced spectral efficiency performance by transmitting extra information bits \cite{jiang2021generalized, basar2016index, basar2017index}. In IM systems, additional information bits can be transmitted by indexing different building blocks such as antenna elements, subcarriers, paths, and clusters \cite{jiang2021generalized, ding2017spatial, raeisi2022cluster, liu2020machine}.

\vspace{-2.5ex}
\subsection{Related works}
\vspace{-0.75ex}
RIS-empowered communication can be considered a promising technology to extend the coverage area in mmWave communication systems. The optimal placement for RIS implementation in mmWave systems has been investigated in \cite{9386246, yildirim2020modeling}. It is shown that an increased achievable rate is obtained when an RIS is located in the proximity of the transmitter or receiver due to the multiplicative path loss of the RIS-assisted path. The authors in \cite{wang2020intelligent} proposed a framework to extend the coverage area by adopting several RISs to assist communication from the base station to the single antenna user. Specifically, they proposed a joint active and passive precoding design to maximize the received signal power. By introducing such a scheme, the proposed RIS-assisted scheme in \cite{wang2020intelligent} can effectively increase the robustness of the mmWave system against blockage. In \cite{wang2020joint}, the authors investigated a mmWave communication system that adopted large antenna arrays at both transmitter and receiver. An optimization problem is formulated to jointly adjust reflection coefficients at the RIS and the hybrid precoder/combiner at the transmitter/receiver to maximize spectral efficiency. Guo \textit{et al.} study a power minimization problem in \cite{guo2021joint} by jointly optimizing digital and analog beamforming matrices at the transmitter and reflection coefficients at the RIS under a given signal-to-interference-plus-noise-ratio (SINR) constraint. 

Spatial scattering modulation (SSM) is introduced in \cite{ding2017spatial} by indexing a set of orthogonal paths in the mmWave environment. However, the authors assume that there is only one cluster in the environment in  \cite{ding2017spatial}, this approach might be impractical for most the mmWave transmission environment. Another drawback of the SSM scheme is the orthogonality assumption among paths. Generalized SSM (GSSM) is also proposed in  \cite{tu2018generalized} by indexing a set of paths instead of a single path. Both works of \cite{ding2017spatial} and \cite{tu2018generalized} adopt linear arrays at the transmitter and receiver, which are not capable of resolving scattering paths in both azimuth and elevation directions. Moreover, in \cite{jiang2021generalized}, the authors investigated a generalization three dimensions (3D) SSM scheme to facilitate the problem of path resolution in both azimuth and elevation dimensions. They also proposed a whitening filter to deal with correlated noises at the receiver. Against this background, the authors of the current work recently introduced cluster index modulation (CIM) in \cite{raeisi2022cluster} in order to efficiently index the clusters in the environment. The well-known Saleh-Valenzuela channel model with multiple clusters is assumed in the CIM scheme, and scattering clusters in the environment are indexed instead of paths. It has been demonstrated in \cite{raeisi2022cluster} that under non-orthogonal path conditions, SSM could not perform well because the paths are generally close to each other, which results in interference on the other indexed paths. Since there is a considerable physical distance between the clusters, indexing clusters is a more reasonable idea, and the superiority of CIM over SSM is shown in terms of bit error rate (BER) in \cite{raeisi2022cluster}.

More recently, \cite{gopi2020intelligent} introduced a twin-RIS structure to perform a beam-index modulation scheme using two RISs between the terminals. The first RIS is located close to the transmitter, while the second one is close to the receiver. The reflecting elements on the second RIS are indexed for the IM purpose and reflect the received beam to the receiver. According to the IM symbol, the first RIS adjusts the phases of the received beam from the transmitter to ensure that only the target element on the second RIS receives the beam.

\vspace{-4ex}
\subsection{Motivations and contributions}
\vspace{-0.8ex}
The state-of-the-art of IM schemes in mmWave communications assumed orthogonality among paths when the array size is large \cite{jiang2021generalized,ding2017spatial,li2019polarized,tu2018generalized, ruan2019diversity}. Nonetheless, this orthogonality assumption is unrealistic, and in practice, there is a correlation among each couple of paths that is not ignorable for indexing paths \cite{raeisi2022cluster}. In the realistic channel model, which includes correlation among paths, power leakage into undesired paths can result in interference, making IM detection more difficult. Hence, a practical IM scheme should be appropriately designed to prevent interference between the paths and even exploit this leakage to enhance the system performance. Although \cite{raeisi2022cluster} demonstrates the superiority of the CIM scheme compared to traditional SSM, the performance of the CIM scheme under the situation that clusters are randomly distributed in both azimuth and elevation dimensions has not been studied before.

As we mentioned earlier, the mmWave channel is vulnerable to blockage due to the intrinsic characteristics of the mmWave signals. Adopting an RIS to the transmission can potentially combat this blockage problem \cite{wang2020joint, wang2020intelligent, ying2020gmd, basar2019wireless, basar2021reconfigurable}. Incorporating IM into RIS-assisted mmWave communications systems is also interesting due to its potential to enhance the coverage, energy efficiency, achievable rate performance, and to decrease the overall cost. From this point of view, an RIS-assisted mmWave communications system model should be designed delicately to make the IM effective. Accordingly, the location of terminals and the IM entities should be selected precisely. In the existing literature, there is a need for a comprehensive IM-based RIS-assisted scheme that considers all of these concerns.

\begin{table}
\caption{List of frequently-used mathematical symbols.\label{Tab:MathSymbols}}
\begin{tabular}{c|p{6.5cm}} 
 \toprule
 $\mathbf{G}$ & S-V channel between Tx and RIS \\ 

 $\mathbf{R}$ & S-V channel between RIS and Rx  \\

 $N_i$ & Number of antenna elements at $i \in$ $\{$Tx, Rx$\}$  \\
 
 $N$ & Number of reflectors at RIS  \\
 
 $G_t$ ($G_r$) & Transmitter (receiver) antenna array gain \\
 
 $C_i$ ($L_i$) & Number of clusters (paths), where $i \in \{ G, R \}$\\
 
 $\alpha$ ($\beta$) & Complex channel gain of $\mathbf{G}$ ($\mathbf{R}$)\\
 
 $\phi^t$ ($\phi^r$) & Azimuth angle of departure (arrival) associated with the RIS\\
 
 $\theta^t$ ($\theta^r$) & Elevation angle of departure (arrival) associated with the RIS\\
 
 $\varphi^t$ ($\varphi^r$) & Azimuth angle of departure (arrival) associated with the transmitter (receiver)\\
 
 $\vartheta^t$ ($\vartheta^r$) & Elevation angle of departure (arrival) associated with the transmitter (receiver)\\
 
 $\mathbf{a}_i(.) $ & Antenna array response associated with the $i \in \{$Tx, Rx$\}$. \\

 $\mathbf{a}_{i,{ris}}(.)$ & $i \in \{$ Transmit, Receive$\}$ antenna array response associated with the RIS\\
 
 $P$ & Transmit power \\
 
 $\mathbf{\Psi}$ & RIS phase shift matrix \\
 
 $\mathbf{\mathbf{f}_t}$ & Analog beamformer vector at the Tx \\
 
 $\mathbf{\mathbf{W}}$ & Analog combiner matrix \\
 
 $\mathbf{y}$ & Receive signal vector before combiner \\
 
 $\mathbf{z}$ & Receive signal vector after combiner \\
 
  $\mathbf{z_F}$ & Filtered receive signal with whitened noise \\
 
 $\mathcal{B}$ & CIM codebook\\
 
 $B$ & CIM codebook order\\
 
 \bottomrule
\end{tabular}
\end{table}

Based on the aforementioned gaps and main motivation, our contributions in this paper are summarized as follows:

\begin{itemize}
    \item This paper proposes an advanced CIM scheme by creating a synergy between the CIM scheme  (introduced in \cite{raeisi2022cluster}) and the RIS technology. The proposed scheme is investigated in a realistic propagation environment in the presence of inter-cluster interference. At the same time, the available clusters are distributed randomly in both azimuth and elevation directions by adopting two-dimensional (2D) antenna arrays at all terminals to demonstrate the effectiveness of the proposed scheme in real channel conditions. In the proposed beam-gain optimized cluster selection CIM (BGCS-CIM) scheme, instead of indexing paths, clusters are indexed; consequently, by selecting the best effective path associated with each cluster and by considering the spatial diversity gain of other paths within the indexed cluster, the effect of inter-cluster interference can be reduced to construct an efficient CIM codebook. It is also shown that the entire cascaded channel is effective in achieving the optimal codebook even though one of the cascaded channels is completely shared between all the indexed entities. Extensive computer simulations reveal that the BGCS-CIM scheme outperforms the existing benchmarks. 
    
    \item We also integrate the CIM technique into an RIS-assisted transmission in mmWave massive MIMO systems to provide a virtual channel between the transmitter and receiver when the direct channel between them is blocked. For this purpose, we design a novel IM framework in which the clusters between the RIS and Rx get involved in the IM process. We also consider an outdoor scenario to make our analyses more practical.

    \item We consider a practical system model in which the channel between transmitter and RIS is line-of-sight (LOS), and the channel between the RIS and receiver is non-LOS (NLOS). In this proposed system model, we index clusters in the NLOS channel between the RIS and receiver to transmit extra information bits without adopting an extra RF chain. To the best of our knowledge, this system model for the RIS-assisted mmWave communication systems has not been studied before.
    
    \item We adopt a whitening transformation to perform decorrelation among the effective noises. Thanks to the whitening filter, the filtered noises corresponding to the different indexed clusters are independent; thus, we can constitute a joint probability density function (PDF) to calculate the conditional pairwise error probability (CPEP) in the case of erroneous cluster index detection. Therefore, we derive a closed-form expression for the average bit error rate (ABER) using the union upper bound. Our extensive computer simulations demonstrate the validity of the derived ABER expression by providing an upper bound to the simulation results. 
\end{itemize}

The remainder of this paper is organized as follows. Section \ref{sec: SystemModel} describes the system and channel model of a practical RIS-assisted mmWave massive MIMO. Then, ABER performance analysis is discussed in Section \ref{Sec: ABER} to obtain a closed-form expression for the proposed scheme. In Section \ref{Sec:Simulation Results}, we provide computer simulation results to validate the ABER performance of the proposed BGCS-CIM scheme and compare it with the available benchmarks. Finally, Section \ref{Sec: Conclusion} concludes this work. For improving the clarity, a list of frequently-used mathematical symbols and a list of acronyms are provided in Tables \ref{Tab:MathSymbols} and \ref{Tab:Acronyms}, respectively. 

\blfootnote{\textit{Notation}: In this paper, bold lower-case and bold upper-case letters represent column vectors and matrices, respectively. $(.)^H$, $(.)^T$, $|.|$, $||.||$, and $\diag(.)$ represent Hermitian, transpose, absolute value, norm, and diagonalization, respectively. $\mathbb{E}[.]$ and $\Re(.)$ denote expectation operator and real part of a complex number, respectively. $\mathcal{CN}(\mu,\sigma^2)$ indicates complex Gaussian distribution with mean $\mu$ and variance $\sigma^2$, while $\mathcal{N}(\mu,\sigma^2)$ expresses Gaussian distribution with mean $\mu$ and variance $\sigma^2$. $\mathcal{U}(a,b)$ depicts uniform distribution with parameters $a$ and $b$. $Q(x) = \frac{1}{\sqrt{2 \pi}}\int_x^{\infty} \exp{(-\frac{u^2}{2})} du$ is the $Q$-function, and $\Gamma(n) = (n-1)!$ represents the gamma function. $i$ is used as a local variable; hereupon, the values are assigned to $i$ in each equation are valid just for that equation.}

\begin{table}
\caption{List of acronyms.\label{Tab:Acronyms}}
\begin{tabular}{c| p{6cm} }
\toprule
IM & Index modulation \\ 
CIM & Cluster index modulation \\
SSM & Spatial scattering modulation\\
Tx & Transmitter\\
Rx & Receive\\
LOS & Line of sight\\
NLOS & Non-line of sight\\
RIS & Reconfigurable intelligence surface \\
BGCS & Beam-gain optimized cluster selection\\
InF & Inside factory\\
CSI & Channel state information\\
PEP & Pairwise error probability\\
CPEP & Conditional pairwise error probability\\
UPEP & Unconditional pairwise error probability\\
ABER & Average bit error rate\\
ZCA & Zero-phase components analysis\\
ULA & Uniform linear array\\
UPA & Uniform planar array\\
\bottomrule
\end{tabular}
\end{table}

\vspace{-5ex}

\section{System Model}\label{sec: SystemModel}
\vspace{-1ex}
Fig. \ref{fig:SystemModel} shows the proposed system model of the RIS-assisted mmWave transmission. Here, the transmitter (Tx) and receiver (Rx) are equipped with $N_t$ and $N_r$ antenna elements, respectively. In the mmWave band, due to the high penetration loss, blockages/outages occur frequently \cite{9685434}. Besides, it is shown that some corners and holes are challenging to cover in outdoor environments \cite{9955484}. Instead of implementing extra base stations, an RIS would be a cost-effective alternative to extend the communication range and cover such areas. Hence, in this paper, it is assumed that the direct channel between the Tx and Rx is completely blocked \cite{9955484, wang2020joint, guo2021joint, 9687593, 9685434, liu2020deep, 9103231, 9386246}. Thus, an RIS with $N$ reconfigurable reflecting elements is implemented to assist the mmWave communication system. Due to the rank-deficiency of the cascaded channels and inherent characteristics of the mmWave signals \cite{wang2020intelligent}, the Tx can only transmit a single stream. Hence, only one RF chain is available at the Tx, while $N_{RF}^r \ll N_r$ RF chains are available at the Rx. Thanks to the IM technique applied through clusters, the RIS can transmit an extra stream, i.e., the spatial domain information represented by $x_0$, without using an extra RF chain \cite{wang2019towards}\footnote{The proposed system model can be generalized for adopting $K$ RISs; consequently, the transmitter and the RIS can transmit $2K$ streams by adopting only $K$ RF chains.}. Since the RF chain is a costly and energy-consuming component in a communication system, decreasing the number of RF chains saves tremendous cost and energy. $\mathbf{G} \in \mathbb{C}^{N \times N_t}$ is the channel between the Tx and RIS, and $\mathbf{R} \in \mathbb{C}^{N_r \times N}$ is the channel between the RIS and Rx. In \cite{9386246} and \cite{yildirim2020modeling}, it is observed that the optimal placement for the RIS is the proximity of the Tx or Rx. In this paper, it is assumed that RIS is located close to the Tx. 
Since the Tx has a fixed location, the RIS can be positioned in the sight of Tx; accordingly, there is a robust LOS link between the Tx and the RIS \cite{wang2020intelligent, akdeniz2014millimeter}. On the other hand, the Rx is located far from the Tx; hence, due to the vulnerability of mmWave to blockage \cite{wang2020intelligent}, we cannot guarantee a LOS link between the RIS and Rx. Therefore, assuming a NLOS dominant channel between the RIS and Rx is more reasonable\footnote{As illustrated in \cite{akdeniz2014millimeter}, in a dense urban area when the carrier frequency is $28$ GHz, for distances less than $25$ m, the channel is almost LOS-dominant, while for distances more than $100$ m, the channel is NLOS-dominant with probability more than $80\%$.}\cite{akdeniz2014millimeter}.

\begin{figure}
    \centering
    \includegraphics[width = \columnwidth]{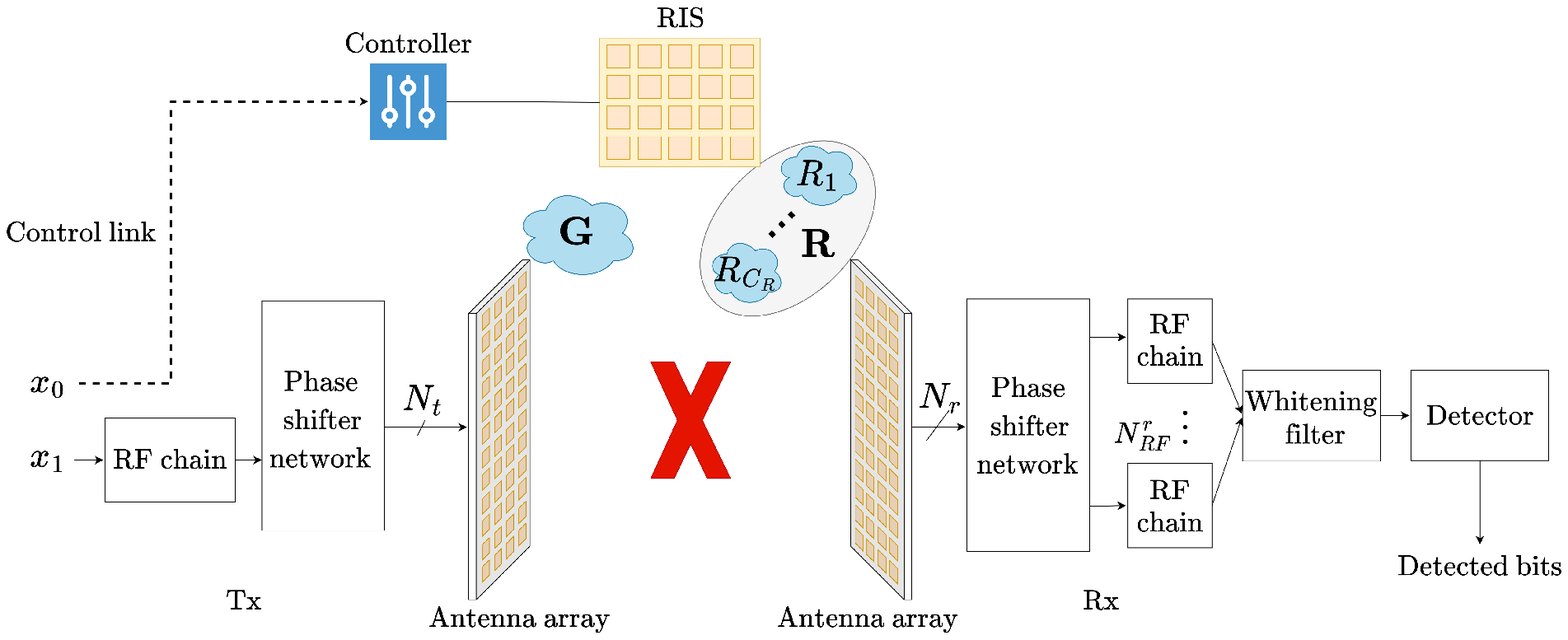}
    \caption{System Model.}
    \label{fig:SystemModel}
\end{figure}

\vspace{-3ex}

\subsection{Channel model}
\vspace{-1ex}
In this paper, we adopt the well-known Saleh-Valenzuela (S-V) channel model for mmWave communication systems as follows \cite{wang2020joint, wang2020intelligent, ying2020gmd, guo2021joint}:
\vspace{-1ex}
\begin{equation}\label{G channel model}
    \mathbf{G} = 
    \alpha_0 \mathbf{a}_{r,{ris}}(\phi_0^{r},\theta_0^{r}) \mathbf{a}_t^H(\varphi_0^t,\vartheta_0^t),
\end{equation}
\begin{equation}\label{R channel model}
    \mathbf{R} = 
    \sqrt{\frac{N N_r}{C_R L_R}} \sum_{c = 1}^{C_R} \sum_{l = 1}^{L_R} \beta_{c,l} \mathbf{a}_r(\varphi_{c,l}^r,\vartheta_{c,l}^r) \mathbf{a}_{t,{ris}}^H(\phi_{c,l}^t,\theta_{c,l}^t),
    \vspace{-0.5ex}
\end{equation}
where $C_R$ is the number of clusters, and $L_R$ is the number of paths in each cluster in the channel $\mathbf{R}$. The complex channel gain for the $l$th path within the $c$th cluster of channel $\mathbf{R}$ is represented by $\beta_{c,l}$ $\sim \mathcal{C}\mathcal{N}(0,10^{-0.1 PL(d)})$, wherein $PL(d)$ is the corresponding path loss dependent to the distance between two associated terminals, i.e., $d$, $\varphi_{c,l}^t$($\varphi_{c,l}^r)$ and $\vartheta_{c,l}^t$($\vartheta_{c,l}^r$) are the azimuth and elevation angles of departure (arrival) of the Tx (Rx), and $\phi_{c,l}^t$($\phi_{c,l}^r$) and $\theta_{c,l}^t$($\theta_{c,l}^r$) are the azimuth and elevation angles of departure (arrival) associated with the RIS.
$\alpha_0$ stands for the complex channel gain of the LOS path of channel $\mathbf{G}$. Moreover, $\mathbf{a}_t(.)$ and $\mathbf{a}_r(.)$ represent the array response vectors associated with the Tx and Rx, respectively; $\mathbf{a}_{t,{ris}}(.)$ and $\mathbf{a}_{r,{ris}}(.)$ show the transmit and receive array response vector associated with the RIS, respectively. In this paper, we adopt uniform planar array (UPA) at all terminals (i.e., Tx, Rx, and RIS); therefore, the normalized antenna array response in each terminal with azimuth angle $\omega^{az}$ and elevation angle $\omega^{el}$ is expressed as follows \cite{liu2020machine, mahmood20223}:
\vspace{-1ex}
\begin{equation}
    \mathbf{a}(\omega^{az}, \omega^{el}) = \frac{1}{\sqrt{N_a}}[1, e^{j\mathbf{k}^T \mathbf{p}_1}, e^{j\mathbf{k}^T \mathbf{p}_2} ..., e^{j\mathbf{k}^T \mathbf{p}_{N_a-1}}]^T,
\end{equation}
where $N_a = N_x \times N_y$ is the total number of antenna elements/reflectors; $N_x$ is the number of antenna elements/reflectors in $x$-axis, and $N_y$ is the number of antenna elements/reflectors in $y$-axis; $\mathbf{p}_n$ specifies the location of $n$th antenna element/reflector, and $\mathbf{k}$ is the wave-number defined as \cite{liu2020machine}:
\vspace{-1ex}
\begin{equation}
    \mathbf{k} = \frac{2 \pi}{\lambda}
    \begin{bmatrix}

        \sin (\omega^{el}) \cos (\omega^{az}),\
        \sin (\omega^{el}) \sin (\omega^{az}) \\
   \end{bmatrix}^T,
\end{equation}
\vspace{-2ex}
\begin{equation}
    \mathbf{p}_{n_x,n_y} = 
    \begin{bmatrix}
        n_x d_x,\
        n_y d_y\\
    \end{bmatrix}^T,
    \vspace{-1ex}
\end{equation}
where $\lambda$ is the wavelength, $0 \leq n_x \leq N_x-1$, and $0 \leq n_y \leq N_y-1$ denote the location of antenna elements/reflectors; $d_x$ and $d_y$ are the separation between two adjacent antenna elements/reflectors along $x$ and $y$ direction, respectively.

\vspace{-3ex}

\subsection{Signal model}
\vspace{-1ex}
As illustrated in Fig. \ref{fig:SystemModel}, the incoming bits constitute two information streams; $x_0$ is CIM symbol, while $x_1$ is modulated using $M$-ary constellation.
Accordingly, the total spectral efficiency is $\eta = \log_2M + \log_2B$ bits per channel use (bpcu), where $B$ is the CIM codebook order. The $M$-ary constellation symbol $s$ is transmitted to the RIS through the wireless channel $\mathbf{G}$. %consequently, the received signal at the RIS is as follows:
%\begin{equation}
%    \Tilde{\mathbf{y}} = \mathbf{G} \mathbf{f}_t s % +  \Tilde{\mathbf{n}},
%\end{equation}
%where $\Tilde{\mathbf{n}} \in \mathbb{C}^{N_{ris} \times 1} \sim \mathcal{CN}(0,\Tilde{\sigma}^2)$,
Besides, the Tx maps each CIM symbol into a phase vector from a pre-defined CIM codebook, i.e., $\mathcal{B} \in [ \mathbf{b}_1, \dots, \mathbf{b}_B ] \in \mathbb{C}^{N \times B}$, which is defined in Section \ref{Sec:CIM codebook}. Each phase vector $\mathbf{b}_i$ represents a specific direction associated with a specific cluster. Afterward, based on the CIM bits and with the help of a smart controller \cite{wang2020joint,wang2020intelligent,ying2020gmd}, the corresponding phase vector will be sent to the RIS to adjust the phases of the received signal at each element of the RIS \cite{gopi2020intelligent}. In other words, the RIS is exploited as a beamformer to reflect the signal passively over a specific cluster in the mmWave channel \cite{gopi2020intelligent}. Denoting $\psi_i$ as the phase shift associated with the $i$th passive element of the RIS, the RIS phase shift matrix associated with vector $\mathbf{b}$, i.e.,  $\mathbf{\Psi}_{\mathbf{b}} \in \mathbb{C}^{N \times N}$, is defined as:
\vspace{-1ex}
\begin{equation}
\vspace{-1ex}
    \begin{split}
        \mathbf{\Psi}_{\mathbf{b}} & \triangleq \diag (e^{j\psi_1}, \dots, e^{j\psi_{N}}) = \diag(\mathbf{b}),
    \end{split}
\end{equation}
where $\mathbf{b} \in \mathcal{B}$ is the codeword corresponding to the transmitted CIM symbol. For the notation simplicity, we represent the effective channel associated with vector $\mathbf{b}$ as $\mathbf{H_{eff}^{\mathbf{b}}} = G_t G_r\mathbf{R}\mathbf{\Psi}_{\mathbf{b}}\mathbf{G}$, where $G_t$ and $G_r$ are the transmitter and receiver antenna gains, respectively.
The received signal at the Rx can be expressed as:
\vspace{-1ex}
\begin{equation}
    \mathbf{y} = \sqrt{P} \mathbf{H_{eff}^{\mathbf{b}}}\mathbf{f}_t s + \mathbf{n},
    \vspace{-2ex}
\end{equation}
where $\mathbf{n} \in \mathbb{C}^{N_r \times 1} \sim \mathcal{CN}(0,\sigma^2)$ is the additive noise component at the Rx, $\mathbf{f}_t \in \mathbb{C}^{N_t \times 1}$ is the analog beamformer at the Tx, and $P$ is the transmit power. The Tx transmits the signal through the LOS path due to its dominance in channel $\mathbf{G}$; thus, we can consider $\mathbf{f}_t = \mathbf{a}_t(\varphi_0^t, \vartheta_0^t)$. In this paper, we assume that the perfect channel state information (CSI) is available \cite{wang2020joint, wang2020intelligent, guo2021joint}. More specifically, channel estimation for the RIS-assisted mmWave systems is discussed in \cite{9685434}\footnote{Authors in \cite{9685434} proposed a low-complexity channel estimation using semi-passive RIS. In this algorithm, few simplified receiver units with only 1-bit quantization at the RIS are randomly connected to a small fraction of reflectors; therefore, the RIS actively gets involved in the channel estimation procedure.}, \cite{9103231, liu2020deep}. The received signal $\mathbf{z} \in \mathbb{C}^{B \times 1}$ after applying analog combiner is given as:
\vspace{-1ex}
\begin{equation}\label{Received Signal after combiner}
\vspace{-1ex}
    \mathbf{z} = \sqrt{P} \mathbf{W}^H \mathbf{H_{eff}^{\mathbf{b}}} \mathbf{f}_t s + \mathbf{W}^H \mathbf{n},
\end{equation}
where $\mathbf{W} = [\mathbf{w}_1, \dots, \mathbf{w}_B] \in \mathbb{C}^{N_r \times B}$ is the analog combiner matrix wherein $\mathbf{w}_i \in \mathbb{C}^{N_r \times 1}$ is the analog combiner vector associated with the $i$th indexed cluster. Section \ref{Sec:CIM codebook} discusses the method to determine $\mathbf{W}$.

It is important to note that even though $\mathbf{n}$ is a white noise vector, the effective noise $\mathbf{W}^H \mathbf{n}$ is colored; consequently, the effective noises of different clusters are correlated. To derive a closed-form expression for CPEP in Section \ref{Sec: ABER}, we need to whiten the effective noise through a whitening filter\footnote{It is worth mentioning that the designed pre-whitening filter whitens the effective noises for theoretical derivations. Accordingly, the whitening process does not get involved in the codebook construction procedure.}. 
\vspace{-1ex}
\begin{lemma}\label{lemm: Whitening Filter}
The whitening filter for the proposed system can be obtained as follows: 
\vspace{-1ex}
\begin{equation}
    \mathbf{B} = \mathbf{P}^{-\frac{1}{2}} \mathbf{V}^{-\frac{1}{2}},
    \vspace{-2ex}
\end{equation}
where $\mathbf{P}$ and $\mathbf{V}$ are the correlation matrix and diagonal variance matrix corresponding to the effective noise $\mathbf{W}^H \mathbf{n}$, respectively.
\end{lemma}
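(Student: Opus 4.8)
The plan is to show that the proposed $\mathbf{B}$ maps the colored effective noise $\mathbf{W}^H\mathbf{n}$ to a spatially white vector, so I would compute the covariance of the filtered noise $\mathbf{B}\mathbf{W}^H\mathbf{n}$ and verify that it equals the identity. First I would evaluate the covariance of the effective noise itself. Since $\mathbf{n}\sim\mathcal{CN}(0,\sigma^2)$ is proper and white, we have $\mathbb{E}[\mathbf{n}\mathbf{n}^H]=\sigma^2\mathbf{I}$, and therefore
\begin{equation*}
\mathbf{C} \triangleq \mathbb{E}\big[(\mathbf{W}^H\mathbf{n})(\mathbf{W}^H\mathbf{n})^H\big] = \sigma^2\,\mathbf{W}^H\mathbf{W}.
\end{equation*}
This matrix is Hermitian positive semidefinite and generically has nonzero off-diagonal entries, which is precisely the inter-cluster noise correlation that the filter must remove.

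Next I would split $\mathbf{C}$ into its variance and correlation factors. Writing $\mathbf{V}=\diag(\mathbf{C})$ for the diagonal matrix of per-cluster noise variances and $\mathbf{P}=\mathbf{V}^{-1/2}\mathbf{C}\,\mathbf{V}^{-1/2}$ for the associated correlation matrix (Hermitian, unit diagonal), this gives the factorization $\mathbf{C}=\mathbf{V}^{1/2}\mathbf{P}\,\mathbf{V}^{1/2}$. Because $\mathbf{V}$ is diagonal with strictly positive entries and $\mathbf{P}\succ 0$ (assuming the combiner columns of $\mathbf{W}$ are linearly independent, so that $\mathbf{C}\succ 0$), the principal square roots $\mathbf{V}^{1/2}$, $\mathbf{P}^{1/2}$ and their inverses are well defined and Hermitian.

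Finally I would substitute the claimed filter $\mathbf{B}=\mathbf{P}^{-1/2}\mathbf{V}^{-1/2}$ and compute the covariance of the filtered noise. Using $\mathbf{B}^H=\mathbf{V}^{-1/2}\mathbf{P}^{-1/2}$, which is valid because both square roots are Hermitian, the product collapses by the very definition of $\mathbf{P}$:
\begin{equation*}
\mathbf{B}\mathbf{C}\mathbf{B}^H = \mathbf{P}^{-1/2}\big(\mathbf{V}^{-1/2}\mathbf{C}\,\mathbf{V}^{-1/2}\big)\mathbf{P}^{-1/2} = \mathbf{P}^{-1/2}\mathbf{P}\,\mathbf{P}^{-1/2} = \mathbf{I}.
\end{equation*}
Hence $\mathbf{B}\mathbf{W}^H\mathbf{n}\sim\mathcal{CN}(0,\mathbf{I})$, so the filtered per-cluster noises are uncorrelated and, being jointly Gaussian, independent, which is exactly the property the CPEP derivation in Section \ref{Sec: ABER} relies on.

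I do not expect a genuine obstacle here: the argument is essentially the standard normalize-then-decorrelate (ZCA-style) whitening identity. The only points requiring care are justifying invertibility of $\mathbf{C}$ so that $\mathbf{P}^{-1/2}$ exists (i.e. $\mathbf{W}$ has full column rank) and confirming the square roots are Hermitian so that $\mathbf{B}^H$ factors cleanly. I would also remark that such a whitening filter is not unique, since any unitary left-multiple of $\mathbf{B}$ also whitens the noise; the lemma should therefore be read as exhibiting one convenient closed-form choice rather than the unique one.
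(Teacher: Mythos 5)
Your proof is correct, but it takes a genuinely different route from the paper. The paper's Appendix \ref{Appndx: Proof of whitening filter} contains no computation at all: it simply cites the survey of whitening transformations in \cite{kessy2018optimal} and states that among the family of valid whiteners the standardized ZCA (ZCA-cor) transform $\mathbf{B}=\mathbf{P}^{-\frac{1}{2}}\mathbf{V}^{-\frac{1}{2}}$ is chosen because it keeps the whitened noise maximally similar to the colored effective noise $\mathbf{W}^H\mathbf{n}$. You instead verify the whitening property from first principles: you compute $\mathbf{C}=\sigma^2\mathbf{W}^H\mathbf{W}$, factor it as $\mathbf{C}=\mathbf{V}^{\frac{1}{2}}\mathbf{P}\,\mathbf{V}^{\frac{1}{2}}$ with $\mathbf{V}=\diag(\mathbf{C})$, and check $\mathbf{B}\mathbf{C}\mathbf{B}^H=\mathbf{I}$, with appropriate care about the full column rank of $\mathbf{W}$ (so $\mathbf{C}\succ 0$ and $\mathbf{P}^{-\frac{1}{2}}$ exists) and the Hermitian principal square roots (so $\mathbf{B}^H$ factors cleanly); your appeal to properness of $\mathbf{n}$ also correctly upgrades uncorrelatedness to the independence used in Appendix \ref{Proof of PEP for erroneous cluster index detection}. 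What each approach buys: yours is self-contained and actually establishes the property the CPEP derivation needs, which the paper leaves implicit in the citation; the paper's argument, conversely, addresses precisely the non-uniqueness you flag at the end --- among all whiteners $\mathbf{Q}\mathbf{P}^{-\frac{1}{2}}\mathbf{V}^{-\frac{1}{2}}$ with $\mathbf{Q}$ unitary, the choice $\mathbf{Q}=\mathbf{I}$ is singled out in \cite{kessy2018optimal} by a maximal-similarity (cross-correlation) criterion, a selection rationale your verification does not supply. One incidental observation your explicit computation surfaces: since the combiner columns are unit-norm array responses, $\mathbf{V}=\sigma^2\mathbf{I}$ and your filtered noise has covariance $\mathbf{I}$, whereas the paper's Appendices \ref{Proof of PEP for correct cluster index detection} and \ref{Proof of PEP for erroneous cluster index detection} treat the per-branch filtered noise as $\mathcal{CN}(0,\sigma^2)$; this is only a fixed known scale factor and does not alter the structure of the CPEP results, but it is a mismatch your direct approach makes visible and the paper's citation-style proof does not.
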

\vspace{-2ex}
\begin{proof}
    Please see Appendix \ref{Appndx: Proof of whitening filter}.
\end{proof}
\vspace{-1ex}
Hence, by applying whitening filter to the received signal $\mathbf{z}$, the filtered signal $\mathbf{z_F} \in \mathbb{C}^{B \times 1}$ is as:
\begin{equation}\label{eq:Filtered signal}
    \mathbf{z_F} = \sqrt{P}  \mathbf{F}_r^H \mathbf{H_{eff}^{\mathbf{b}}} \mathbf{f}_t s + \mathbf{F}_r^H \mathbf{n}.
\end{equation}
where $\mathbf{F}_r = \mathbf{W}\mathbf{B}^H = [\mathbf{f}_{r,1}, \dots, \mathbf{f}_{r,B}]$.
We adopt maximum-likelihood (ML) detector to resolve the transmitted symbols from the received filtered signal as follows\footnote{It is worth mentioning that although the ML detector is the best option in terms of BER-vs-transmitted power performance, it leads to a high-complexity receiver. In order to decrease the detector’s complexity, sub-optimal detectors should be designed by considering different use cases of the CIM scheme, which is beyond the scope of this paper and is left for future works.}:
\begin{equation}
    [\hat{c},\hat{s}] = \arg \min_{c,s} \big | \mathbf{z_F}(c) - \sqrt{P} \mathbf{f}_{r,c}^H \mathbf{H_{eff}^{\mathbf{b}}} \mathbf{f}_t s \big |^2,
\end{equation}
where $\hat{c}$ and $\hat{s}$ are the estimated CIM and $M$-ary symbols, respectively; and $\mathbf{z_F}(c)$ is the $c$th element in $\mathbf{z_F}$.

\vspace{-2ex}
\subsection{Constructing CIM codebook and analog combiner}\label{Sec:CIM codebook}
\vspace{-1ex}
This subsection presents a scheme for constructing CIM codebook, i.e., $\mathcal{B}$, and analog combiner, i.e., $\mathbf{W}$, for the RIS-assisted mmWave communication systems.
Let us define the effective path gain of the $v$th path in the $u$th cluster as follows:
{\fontsize{9.7}{10}
\begin{equation}\label{eq:EffectivePath}
    \mathcal{G}_{u,v} =
    \frac{\sqrt{P}G_tG_r}{\sqrt{N}}\mathbf{a}_r^H(\varphi_{u,v}^r, \vartheta_{u,v}^r) \mathbf{R} \diag(\mathbf{a}_{t,{ris}}(\phi_{u,v}^t, \theta_{u,v}^t))\mathbf{G}\mathbf{f}_t.
    \vspace{-0.5ex}
\end{equation}}

We assume each cluster is represented by the best effective path as follows:
\vspace{-2ex}
\begin{equation}\label{eq:best path of each cluster}
\vspace{-1ex}
    p_{c} = \arg \max_{i = 1, \dots, L_R} |\mathcal{G}_{c,i}|^2,
\end{equation}
where $p_{c}, c = 1, \dots, C_R$, is the best path index in the $c$th cluster.
Henceforth, we show the set of all clusters with their representative path as $\mathcal{P} = \{p_1, \dots, p_{C_R}\}$.

%The first scheme, named beam-gain optimized cluster selection (BGCS), maximizes the minimum Euclidean distance of all possible transmitted symbols. According to the BGCS, the best cluster for indexing among the non-indexed clusters can be determined as
%\begin{equation} \label{eq:BGCS}
%    b = \arg \ \max_{i \in \mathbb{P}} \{ \min_{s_1 \neq s_2 \in \mathcal S} |\mathcal{G}_{i,p_i}(s_1 - s_2)|^2 \},
%\end{equation}
%where $S$ is the $M$-ary constellation set.

%As discussed earlier, the inter-cluster interference has a destructive effect on the CIM symbol detection and ($\ref{eq:BGCS}$). The BGCS scheme does not consider the effect of inter-cluster interference. 

%To deal with this drawback, we developed the second scheme to reduce the inter-cluster interference effect by revising (\ref{eq:BGCS}), named beam-interference optimized cluster selection (BICS). 

Thereupon, the BGCS-CIM scheme constructs the CIM-codebook consecutively by adopting the following criteria:
\vspace{-1ex}
\begin{equation}\label{eq:BGCS}
\vspace{-1ex}
    \begin{split}
        b =
        \arg \ \max_{i =  1 , \dots, C_R} |\mathcal{G}_{i,p_i}|^2.
    \end{split}
\end{equation}
Notice that $b$ gives the best choice among all possible options of the corresponding scheme. To construct the CIM codebook, we need to choose $B$ out of $C_R$ clusters for indexing. It is worth noting that the codewords carry the information to adjust RIS phase shifts. The Tx should select a codeword according to the CIM symbol and transfer it to the RIS through the control link. Finally, analog combiner vector associated with $i$th indexed cluster can be determined as $\mathbf{w}_i = \mathbf{a}_r(\varphi_{i,p_i}^r,\vartheta_{i,p_i}^r)$. Algorithm \ref{alg:CIM codebook construction} presents the CIM codebook and analog combiner construction procedure according to the BGCS-CIM scheme\footnote{The primary purpose of the BGCS-CIM algorithm is to index the clusters in an RIS-assisted mmWave system and reduce the interference among the indexed entities. The correlation among paths and clusters is an inseparable characteristic of the mmWave channel. Nevertheless, the essential point is exploiting them concurrently while decreasing the entailed interference. BGCS-CIM proposes an algorithm to reduce this interference in the analog stage without involving a baseband process.}. 

%\footnotesize
\begin{algorithm}[t]
\footnotesize
\caption{CIM codebook and analog combiner construction algorithm according to the BGCS-CIM scheme.}\label{alg:CIM codebook construction}
\KwIn{$\mathbf{R}, \mathbf{G}, \mathbf{f}_t, \phi, \theta, \varphi, \vartheta, \mathbf{a}(.)$}
\KwOut{$\mathcal{B}$}
%\For{$u \gets 1 \ to \ C_R$}{
    %\For{$v \gets 1 \ to \ L_R$}{
        Calculate the effective path gain for all paths using (\ref{eq:EffectivePath}).\\
        %$\mathcal{G}_{u,v}=\mathbf{a}_r^H(\varphi_{u,v}^r, \vartheta_{u,v}^r) \mathbf{R} \diag(\mathbf{a}_{t,{ris}}(\phi_{u,v}^t, \theta_{u,v}^t)) \mathbf{G} \mathbf{f}_t.$
    %}
%}
%\For{$c \gets 1 \ to \ C_R$}{
    Find the best effective path for each cluster using (\ref{eq:best path of each cluster}).\\
    %$p_{c} = \arg \max\limits_{i = 1, \dots, L_R} |\mathcal{G}_{c,i}|^2$.
%}
Construct CIM codeword index candidates:
$\mathcal{P} = \{ p_1, \dots, p_{C_R} \}.$\\
\For{$k \gets 1 \ to \ B$}{
    Calculate the best cluster $b_k$ among $\mathcal{P}$ by adopting BGCS-CIM scheme in (\ref{eq:BGCS}).\\
    %$b_k = \arg \max\limits_{i \in \mathbb{P}} \big\{ \min\limits_{s_1 \neq s_2 \in \mathcal{S}} \frac{|\mathcal{G}_{i,p_i}(s_1 - s_2)|^2}{\sum\limits_{\substack{j \in \mathbb{P}\\ j\neq i}}|\mathcal{I}_{j,i}(s_1 - s_2)|^2}\big\}$, and $\mathcal{I}_{j,i}$ is as (\ref{eq:Inter-cluster interference}).\\
    calculate $k$th codeword using $\mathbf{b}_k = \mathbf{a}_{t,ris}(\phi_{b_k,p_{b_k}}^t,\theta_{b_k,p_{b_k}}^t)$.\\ 
    calculate $k$th combiner vector using $\mathbf{w}_k = \mathbf{a}_r(\varphi_{b_k,p_{b_k}}^r,\vartheta_{b_k,p_{b_k}}^r)$.\\
    Update $\mathcal{P}$:
    $\mathcal{P} = \mathcal{P} - \{b_k\}.$
}
Construct CIM codebook: $\mathcal{B} = \{ \mathbf{b}_{1}, \dots, \mathbf{b}_B \}.$\\
Construct analog combiner: $\mathbf{W} = [\mathbf{w}_{1}, \dots, \mathbf{w}_{B}]$.
\end{algorithm}\normalsize

\vspace{-3ex}

\section{ABER Performance Analysis} \label{Sec: ABER}
\vspace{-0.5ex}
In this section we derive a closed-form expression for the upper bound on ABER to validate performance of the proposed system. The upper bound on ABER can be stated as:
\vspace{-1ex}
\begin{equation}
\vspace{-1ex}
    \textrm{ABER} \leq \frac{1}{\eta M B} \sum_{c^*,s^*} \sum_{\hat{c}, \hat{s}} E_b(\{ c^*, s^* \} \rightarrow \{ \hat{c}, \hat{s} \}) \mathrm{UPEP},
\end{equation}
where $E_b(\{c^*, s^*\} \rightarrow \{ \hat{c}, \hat{s} \})$ is the total number of erroneous bits of both $M$-ary and CIM symbols, and $\mathrm{UPEP}$ is the unconditional pairwise error probability. To obtain the upper bound, we first state the CPEP as follows:
\vspace{-1ex}
\begin{equation}\label{CPEP_initial}
\vspace{-1ex}
    \begin{split}
        \mathbb{P} (\{ & c^*,s^* \} \rightarrow \{ \hat{c}, \hat{s} \} | \alpha_0, \beta_{1,1}, \dots, \beta_{C_R, L_R}) = \mathbb{P}(|\mathbf{z_F}(c^*) \\
        & - \sqrt{P} \mathbf{f}_{r,c^*}^H \mathbf{H_{eff}^{\mathbf{b^*}}} \mathbf{f}_t s^*|^2 
        > |\mathbf{z_F}(\hat{c}) - \sqrt{P} \mathbf{f}_{r,\hat{c}}^H \mathbf{H_{eff}^{\mathbf{\hat{b}}}} \mathbf{f}_t \hat{s}|^2),
    \end{split}
\end{equation}
where $(.)^*$ and $\hat{(.)}$ depict the transmitted and detected symbols, respectively. By substituting $\mathbf{z_F}(c^*)$ from (\ref{eq:Filtered signal}) into (\ref{CPEP_initial}), we can rewrite the CPEP expression as:
\vspace{-2ex}
\begin{equation}\label{CPEP}
    \begin{split}
        \mathbb{P} (\{ c^*,s^* \} & \rightarrow  \{ \hat{c}, \hat{s} \} | \alpha_0, \beta_{1,1}, \dots, \beta_{C_R, L_R}) \\
        & = \mathbb{P}(|\mathbf{f}_{r,c^*}^H \mathbf{n}|^2 > |\mathbf{z_F}(\hat{c}) - \sqrt{P} \mathbf{f}_{r,\hat{c}}^H \mathbf{H_{eff}^{\mathbf{\hat{b}}}}\mathbf{f}_t \hat{s}|^2).
    \end{split}
    \vspace{-2ex}
\end{equation}
In the following, we consider two cases of correct cluster index detection, i.e., $\hat{c} = c^*$, and erroneous cluster index detection, i.e., $\hat{c} \neq c^*$. 

\vspace{-3ex}
\subsection{Correct Cluster Index Detection $\hat{c} = c^*$}
\vspace{-1ex}
In the case of correct cluster index detection, we assume that the detector resolves the CIM symbol correctly; hence, the error comes from the $M$-ary symbol. To obtain the CPEP expression for this case, (\ref{CPEP}) can be updated as:
\begin{equation} \label{eq: CPEP correct CIM detection}
    \begin{split}
        & \mathbb{P} (\{ c^*,s^* \} \rightarrow \{ c^*, \hat{s} \} | \alpha_0, \beta_{1,1}, \dots, \beta_{C_R, L_R}) \\
        & = \mathbb{P}(|\mathbf{f}_{r,c^*}^H \mathbf{n}|^2 > |\mathbf{f}_{r,c^*}^H \mathbf{n} + \sqrt{P} \mathbf{f}_{r,c^*}^H \mathbf{H_{eff}^{\mathbf{b^*}}} \mathbf{f}_t (s^* - \hat{s})|^2).
    \end{split}
\end{equation}

\begin{lemma} \label{PEP for correct cluster index detection}
In the case of correct cluster index detection, the analytical expression for $\mathrm{UPEP}$ is as follows:
\begin{equation}\label{eq:correct CIM UPEP}
\begin{split}
    \mathbb{P} (\{ c^*, & s^* \}\rightarrow \{ c^*, \hat{s} \}) \\
    & = \mathbb{E} \bigg[ Q \Big( \frac{| \mathbf{f}_{r,c^*}^H \mathbf{H_{eff}^{\mathbf{b^*}}} \mathbf{f}_t (s^* - \hat{s}) |^2}{\sqrt{2} \sigma || \mathbf{f}_{r,c^*} \mathbf{f}_{r,c^*}^H \mathbf{H_{eff}^{\mathbf{b^*}}} \mathbf{f}_t (s^* - \hat{s}) ||} \Big) \bigg].
\end{split}
\end{equation}
\end{lemma}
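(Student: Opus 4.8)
The plan is to start from the conditional expression in (\ref{eq: CPEP correct CIM detection}) and collapse the pairwise error event onto the tail of a \emph{single real} Gaussian, so that the conditional PEP becomes a $Q$-function; the unconditional PEP then follows immediately by averaging over the channel gains, which is precisely the $\mathbb{E}[\cdot]$ that appears in the statement. Since the correct-detection case leaves the expectation unevaluated, all the real work is in turning the conditional probability into a $Q$-function with the right argument.

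First I would introduce shorthand for the two scalars inside the event: the effective (scalar) noise $n_{c^*} = \mathbf{f}_{r,c^*}^H \mathbf{n}$ and the deterministic signal-difference term $d = \sqrt{P}\,\mathbf{f}_{r,c^*}^H \mathbf{H_{eff}^{\mathbf{b^*}}} \mathbf{f}_t (s^* - \hat{s})$. The error event in (\ref{eq: CPEP correct CIM detection}) is then $|n_{c^*}|^2 > |n_{c^*} + d|^2$. Expanding the right-hand magnitude as $|n_{c^*}|^2 + 2\Re(n_{c^*}^* d) + |d|^2$ and cancelling the common $|n_{c^*}|^2$ reduces the event to the one-dimensional linear inequality $\Re(n_{c^*}^* d) < -\tfrac{1}{2}|d|^2$. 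This is the key simplification: although $|n_{c^*}|^2$ is itself a (conditionally) quadratic quantity, the pairwise comparison only involves the real projection $\Re(n_{c^*}^* d)$.

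The next step is to identify the conditional law of $X := \Re(n_{c^*}^* d)$. Because $\mathbf{n}\sim\mathcal{CN}(0,\sigma^2 \mathbf{I})$, the scalar $n_{c^*}$ is zero-mean complex Gaussian with variance $\sigma^2\|\mathbf{f}_{r,c^*}\|^2$; writing $X$ in terms of the independent real and imaginary parts of $n_{c^*}$ gives $X\sim\mathcal{N}\big(0,\tfrac{1}{2}\sigma^2\|\mathbf{f}_{r,c^*}\|^2|d|^2\big)$. Hence the conditional PEP is $\mathbb{P}\big(X < -\tfrac{1}{2}|d|^2\big) = Q\big(\tfrac{|d|^2/2}{\sqrt{\sigma^2\|\mathbf{f}_{r,c^*}\|^2|d|^2/2}}\big)$. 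Leaving this ratio \emph{uncancelled} and writing the denominator as the norm of the outer product, $\sqrt{2}\,\sigma\,\|\mathbf{f}_{r,c^*}\|\,|d| = \sqrt{2}\,\sigma\,\|\mathbf{f}_{r,c^*}\mathbf{f}_{r,c^*}^H \mathbf{H_{eff}^{\mathbf{b^*}}} \mathbf{f}_t (s^*-\hat{s})\|$ (up to the transmit-power factor $\sqrt{P}$ carried by $d$), reproduces exactly the argument displayed in (\ref{eq:correct CIM UPEP}). Taking the expectation over $\alpha_0,\beta_{1,1},\dots,\beta_{C_R,L_R}$ then yields the claimed unconditional form.

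I expect the only genuine obstacle to be the variance bookkeeping in the third step. One must treat $\Re(n_{c^*}^* d)$ as the projection of the complex Gaussian $n_{c^*}$ onto a single real direction determined by $d$, so that its variance picks up the factor $1/2$ together with the deterministic weight $|d|^2$; a naive attempt to keep $|n_{c^*}|^2$ as a chi-squared variable would miss the point that the two $|n_{c^*}|^2$ terms cancel. Getting this factor right is exactly what produces the $\sqrt{2}$ and the precise denominator of the $Q$-function, while the remaining manipulations are routine expansion, cancellation, and standardisation of a Gaussian tail.
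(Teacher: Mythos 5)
Your proof is correct and takes essentially the same route as the paper's Appendix B: both expand the squared magnitudes, cancel the common $|\mathbf{f}_{r,c^*}^H\mathbf{n}|^2$ term, recognize the remaining real projection as a one-dimensional zero-mean Gaussian (the paper packages it, constant included, as $\zeta\sim\mathcal{N}(\mu_\zeta,\sigma_\zeta^2)$ and writes the CPEP as $Q(\mu_\zeta/\sigma_\zeta)$, which is exactly your standardization of $\Re(n_{c^*}^*d)$ against the threshold $-\tfrac{1}{2}|d|^2$), and then average over the channel gains. Your parenthetical about the $\sqrt{P}$ factor carried by $d$ is also apt, since the paper's own ratio $\mu_\zeta/\sigma_\zeta$ retains a $\sqrt{P}$ that is silently absent from the displayed $Q$-function argument in the lemma.
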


\begin{proof}
    Please see Appendix \ref{Proof of PEP for correct cluster index detection}.
\end{proof}

\vspace{-5ex}

\subsection{Erroneous Cluster Index Detection $\hat{c} \neq c^*$}
When the detector resolves the CIM symbol erroneously, error can propagate to the $M$-ary symbol. In this case, (\ref{CPEP}) can be rewritten as follows:
\begin{equation}\label{eq:Erroneous CIM UPEP}
\begin{split}
    \mathbb{P} (\{ c^*, s^* \} & \rightarrow \{ \hat{c}, \hat{s} \} | \alpha_0, \beta_{1,1}, \dots, \beta_{C_R, L_R}) = \mathbb{P} (|\mathbf{f}_{r,c^*}^H \mathbf{n}|^2\\ 
    & - |\mathbf{f}_{r,\hat{c}}^H \mathbf{n}
    + \sqrt{P} \mathbf{f}_{r,\hat{c}}^H  (\mathbf{H_{eff}^{b^*}} s^* - \mathbf{H_{eff}^{\hat{b}}} \hat{s} ) \mathbf{f}_t|^2 > 0).
\end{split}
\end{equation}

\begin{lemma}\label{PEP for erroneous cluster index detection}
In the case of erroneous cluster index detection, the analytical expression for $\mathrm{UPEP}$ can be obtained as:
\begin{equation}\label{UPEP for errouneous index detection}
\begin{split}
    \mathbb{P} & (\{ c^*,s^* \} \rightarrow \{ \hat{c}, \hat{s} \}) \\
    & = \mathbb{E} \bigg[ \frac{1}{2} \exp{\Big( - \frac{|\sqrt{P} \mathbf{f}_{r,\hat{c}}^H (\mathbf{H_{eff}^{b^*}} s^* - \mathbf{H_{eff}^{\hat{b}}} \hat{s} ) \mathbf{f}_t |^2}{2 \sigma^2} \Big)} \bigg].
\end{split}
\end{equation}
\end{lemma}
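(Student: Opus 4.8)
The plan is to start from the conditional pairwise error probability (CPEP) in (\ref{eq:Erroneous CIM UPEP}) and reduce it to a statement about two scalar filtered-noise terms, then invoke the independence guaranteed by the whitening filter of Lemma~\ref{lemm: Whitening Filter}. To lighten the notation, abbreviate the channel-conditioned deterministic quantity $D \triangleq \sqrt{P}\,\mathbf{f}_{r,\hat{c}}^H(\mathbf{H_{eff}^{b^*}} s^* - \mathbf{H_{eff}^{\hat{b}}}\hat{s})\mathbf{f}_t$ and the two filtered-noise scalars $n_1 \triangleq \mathbf{f}_{r,c^*}^H\mathbf{n}$ and $n_2 \triangleq \mathbf{f}_{r,\hat{c}}^H\mathbf{n}$. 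With this notation the CPEP in (\ref{eq:Erroneous CIM UPEP}) reads $\mathbb{P}(|n_1|^2 > |n_2 + D|^2)$.

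First I would pin down the joint law of $n_1$ and $n_2$. Since the filtering matrix $\mathbf{F}_r = \mathbf{W}\mathbf{B}^H$ incorporates the whitening filter $\mathbf{B}$ of Lemma~\ref{lemm: Whitening Filter}, the filtered-noise vector $\mathbf{F}_r^H\mathbf{n} = \mathbf{B}\mathbf{W}^H\mathbf{n}$ has covariance $\sigma^2\mathbf{B}\mathbf{W}^H\mathbf{W}\mathbf{B}^H = \sigma^2\mathbf{I}$. Hence each component is marginally $\mathcal{CN}(0,\sigma^2)$ and, crucially, the components are mutually \emph{independent}. In particular $n_1$ and $n_2$ are independent $\mathcal{CN}(0,\sigma^2)$ variables, so $X \triangleq |n_1|^2$ is exponential with mean $\sigma^2$ and is independent of $Y \triangleq |n_2 + D|^2$, because $X$ is a function of $n_1$ alone while $Y$ is a function of $n_2$ alone (with $D$ deterministic given the channel).

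Next I would condition on $Y$. Using the exponential survival function $\mathbb{P}(X > y) = e^{-y/\sigma^2}$ together with the independence of $X$ and $Y$, the CPEP becomes $\mathbb{P}(X > Y) = \mathbb{E}_Y\!\left[e^{-Y/\sigma^2}\right]$, where $n_2 + D \sim \mathcal{CN}(D,\sigma^2)$. I would evaluate this expectation directly as a Gaussian integral over the complex plane, completing the square via the identity $|u|^2 + |u - D|^2 = 2\,|u - D/2|^2 + |D|^2/2$. The residual Gaussian integrates to $\pi\sigma^2/2$ against the $\tfrac{1}{\pi\sigma^2}$ density normalization, which leaves exactly the factor $\tfrac{1}{2}$ and the exponential $\exp(-|D|^2/(2\sigma^2))$. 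This gives the CPEP $\tfrac{1}{2}\exp(-|D|^2/(2\sigma^2))$; averaging over the channel realizations $\alpha_0, \beta_{1,1},\dots,\beta_{C_R,L_R}$ on which $D$ depends then yields the claimed UPEP in (\ref{UPEP for errouneous index detection}).

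The main obstacle is the independence step, not the integral. Without whitening, $n_1$ and $n_2$ are correlated because $\mathbf{W}^H\mathbf{n}$ is colored, so $X$ and $Y$ would fail to factor and the clean reduction $\mathbb{P}(X>Y)=\mathbb{E}_Y[e^{-Y/\sigma^2}]$ would not hold. This is precisely where Lemma~\ref{lemm: Whitening Filter} is indispensable: establishing that the filter makes the relevant pair of filtered-noise components jointly independent (for jointly Gaussian variables this follows from their being uncorrelated after whitening) is the crux of the argument. Once that is secured, the completing-the-square evaluation of the Laplace transform is routine.
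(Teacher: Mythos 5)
Your proposal is correct and yields the paper's CPEP, $\tfrac{1}{2}\exp\big(-|D|^2/(2\sigma^2)\big)$, but the computational core is genuinely different from the paper's. The paper normalizes $\chi_1 = 2|n_1|^2/\sigma^2$ and $\chi_2 = 2|n_2+D|^2/\sigma^2$, identifies them as central and non-central chi-squared variables with two degrees of freedom, forms their joint PDF using the same whitening-induced independence you invoke, and then evaluates $\mathbb{P}(\chi_1>\chi_2)$ by a double integral in which the first-kind Bessel function inside the non-central PDF is expanded as a power series and integrated term by term, with the Maclaurin series of $e^x$ and $\int_0^\infty x^i e^{-x}\,dx = i!$ resumming everything into $\tfrac{1}{2}\exp(-\Lambda/4)$; see (\ref{CEPE proof for errouneous CIM detection}). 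You shortcut this machinery: since $X=|n_1|^2$ is exponential with mean $\sigma^2$ and independent of $Y=|n_2+D|^2$, the survival function reduces the CPEP to the Laplace transform $\mathbb{E}_Y\big[e^{-Y/\sigma^2}\big]$, which you evaluate in one complex Gaussian integral via the completing-the-square identity $|u|^2+|u-D|^2 = 2|u-D/2|^2 + |D|^2/2$ (which checks out, as does the resulting factor $\tfrac{1}{2}$). This is more elementary --- no non-central chi-squared density, no Bessel series, no term-by-term integration --- and you additionally make explicit the structural fact the paper only asserts, namely that the whitening filter leaves the filtered noise components jointly circularly symmetric Gaussian and uncorrelated, hence independent. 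One minor bookkeeping caveat: with the ZCA-cor filter $\mathbf{B}=\mathbf{P}^{-1/2}\mathbf{V}^{-1/2}$ of Lemma \ref{lemm: Whitening Filter}, where $\mathbf{V}$ is the variance matrix of the effective noise, the whitened covariance is $\mathbf{I}$ rather than $\sigma^2\mathbf{I}$; since the combiner columns are unit-norm array responses one has $\mathbf{V}=\sigma^2\mathbf{I}$, and the paper's own appendix adopts the same convention of treating each filtered noise branch as $\mathcal{CN}(0,\sigma^2)$, so your normalization is consistent with the derivation as published. Averaging your channel-conditioned result over $\alpha_0,\beta_{1,1},\dots,\beta_{C_R,L_R}$ then gives (\ref{UPEP for errouneous index detection}) exactly as in the paper.
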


\begin{proof}
    Please see Appendix \ref{Proof of PEP for erroneous cluster index detection}.
\end{proof}

Using (\ref{eq:correct CIM UPEP}) and (\ref{UPEP for errouneous index detection}), we can express $\mathrm{UPEP}$ as:

\vspace{-2ex}
\begin{equation}
    \begin{split}
        &\mathrm{UPEP} = \\
        &
        {\fontsize{9.4}{9}
        \begin{cases}
            \mathbb{E} \bigg[ Q \Big( \frac{| \mathbf{f}_{r,c^*}^H \mathbf{H_{eff}^{\mathbf{b^*}}} \mathbf{f}_t (s^* - \hat{s}) |^2}{\sqrt{2} \sigma || \mathbf{f}_{r,c^*} \mathbf{f}_{r,c^*}^H \mathbf{H_{eff}^{\mathbf{b^*}}} \mathbf{f}_t (s^* - \hat{s}) ||} \Big) \bigg], \textrm{if} \  \{ c^*, s^* \} \rightarrow \{ c^*, \hat{s} \}, \\
            \\
            \mathbb{E} \bigg[ \frac{1}{2} \exp{\Big( - \frac{|\sqrt{P} \mathbf{f}_{r,\hat{c}}^H (\mathbf{H_{eff}^{b^*}} s^* - \mathbf{H_{eff}^{\hat{b}}} \hat{s} ) \mathbf{f}_t |^2}{2 \sigma^2} \Big)} \bigg],  \textrm{if} \  \{ c^*, s^* \} \rightarrow \{ \hat{c}, \hat{s} \}.
        \end{cases}
        }
    \end{split}
\end{equation}

\vspace{-4ex}

\section{Simulation Results}\label{Sec:Simulation Results}
\vspace{-1ex}
This section provides computer simulation results for the proposed BGCS-CIM scheme for the RIS-assisted mmWave system. We first describe the simulation setup for a practical scenario in an outdoor environment. Then, we validate our simulation results with the derived analytical upper bound. We also compare the proposed scheme with the simple CIM scheme introduced in \cite{raeisi2022cluster}, the SSM-based scheme proposed in \cite{jiang2021generalized}, the orthogonal match pursuit (OMP) method proposed in \cite{el2014spatially}, and the random cluster selection (RCS) scheme in which a cluster is randomly selected for transmitting the conventional $M$-ary signaling. Furthermore, we investigate the proposed system performance for different antenna elements/reflectors and different channel sparsity. Ultimately, we examine the robustness of the proposed scheme with respect to the angular information change.

\begin{figure}
    \centering
    \includegraphics[scale = 0.15]{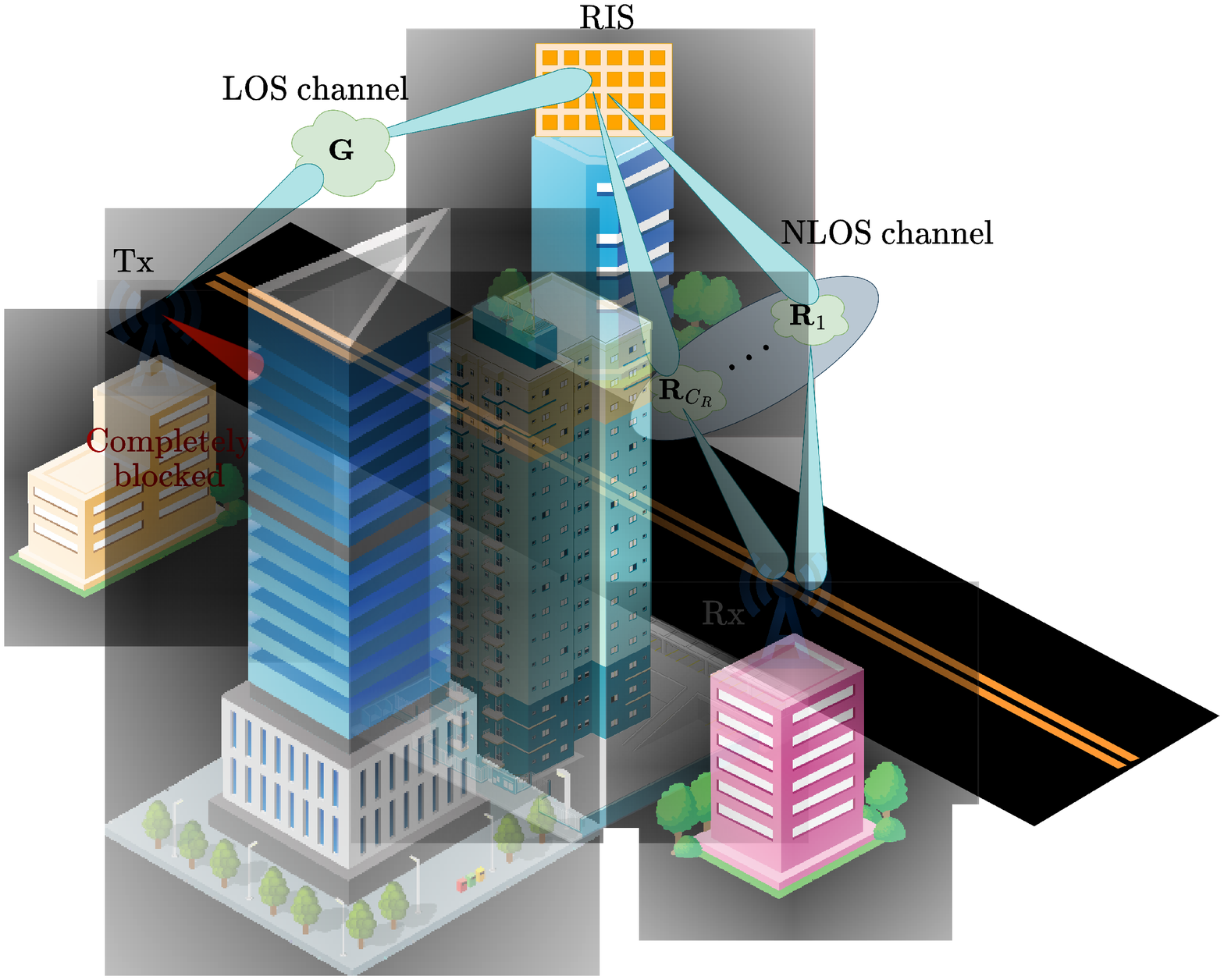}
    \caption{Considered outdoor scenario with RIS deployment on top of the roof of a neighboring building.}
    \label{fig:considered scenario}
\end{figure}

\begin{table}
  \centering
  \caption{Simulation parameters.}
  \vspace{-3ex}
  \begin{tabular}{c}
  \begin{tabular}[t]{ c  c  c }
  \toprule
  \multicolumn{2}{c}{\textbf{Parameter}}    & \textbf{Assumed value}  \\
  \toprule
  \multicolumn{2}{c}{Tx location}      & $(3 \ \textrm{m}, 0, 12 \ \textrm{m})$   \\ [1 pt]
  \hline
  \multicolumn{2}{c}{RIS location}     & $(0, 3 \ \textrm{m}, 15 \ \textrm{m})$    \\ [1 pt]
  \hline
  \multicolumn{2}{c}{Rx location}     & $(3 \ \textrm{m}, 103 \ \textrm{m}, 6 \ \textrm{m})$    \\ [1 pt]
  \hline
  \multicolumn{2}{c}{PSD}     & $-174$ dBm    \\
  \hline
  \multicolumn{2}{c}{$G_t = G_r$}      & $24.5$ dBi   \\ 
  \hline
  \multirow{3}*{LOS}      & $a$ & $61.4$   \\
       & $b$ & $2$  \\
       & $\sigma_{\xi}$ & $5.8$ \\  [1 pt]
  \hline 
  \multirow{3}*{NLOS}      & $a$ & $72$   \\
       & $b$ & $2.92$  \\
       & $\sigma_{\xi}$ & $8.7$ \\ 
  \bottomrule
\end{tabular}
\begin{tabular}[t]{c c}
    \toprule
    \textbf{Parameter} & \textbf{Assumed value} \\
    \toprule
    $f$ & $28$ GHz  \\
    \hline
    $BW$ & $100$ MHz\\
    \hline
    $L_R$ & $10$ \\
    \hline
    $C_R$ & $8$ \\
    \hline
    $N$ & $10 \times 10$\\
    \hline
    $N_t = N_r$ & $8 \times 8$ \\
    \hline
    $d_x = d_y$ & $\lambda / 2$\\
    \hline
    $\phi_m^t, \varphi_m^t$ & $\mathcal{U}[0,2\pi)$\\
    \hline
    $\theta_m^t, \vartheta_m^t$ & $\mathcal{U}[0,\pi)$\\
    \hline
    $\sigma_{\phi_m^t}, \sigma_{\theta_m^t}$ & $7.5^\circ$\\
    \hline
    $\sigma_{\varphi_m^r}, \sigma_{\vartheta_m^r}$ & $7.5^\circ$\\
    \bottomrule 
\end{tabular}
\end{tabular}
  \label{tab:System parameters}
\end{table}

\begin{figure}
    \centering
        \includegraphics[width = \columnwidth]{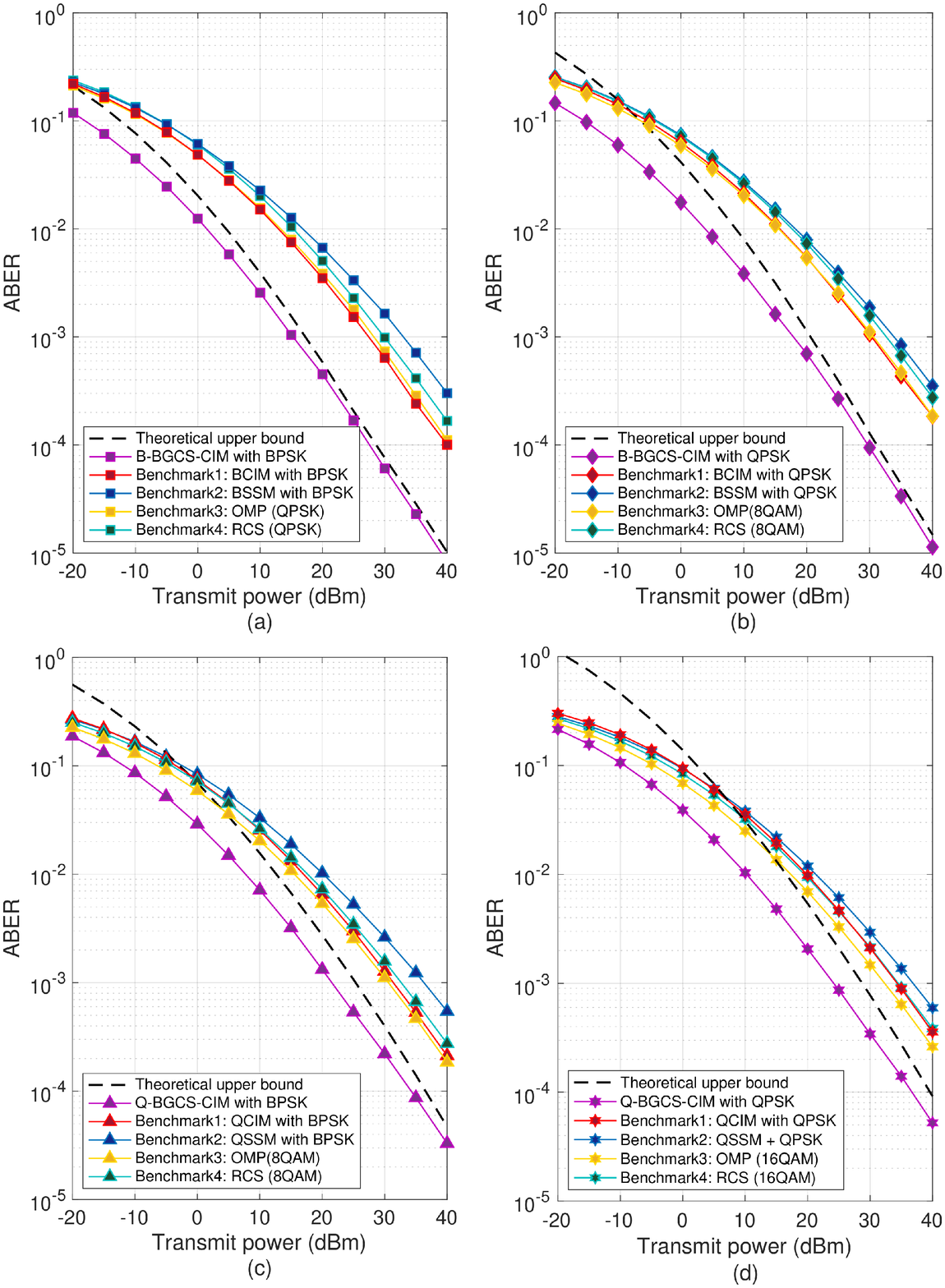}
    \caption{ABER performance, analytical validation and benchmarks comparison: (a) B-IM with BPSK, (b) B-IM with QPSK, (c) Q-IM with BPSK, (d) Q-IM with QPSK.}
    \label{fig:CIMvsSSM}
\end{figure}

\vspace{-4ex}
\subsection{Simulation setup}
\vspace{-1ex}
In this subsection, we investigate a communication scenario between two BSs (back-haul scenario) in an urban area \cite{akdeniz2014millimeter}. The coordinates of the Tx, RIS, and Rx are $(3 \ \textrm{m}, 0, 12 \ \textrm{m})$, $(0, 3 \ \textrm{m}, 15 \ \textrm{m})$, and $(3 \ \textrm{m}, 103 \ \textrm{m}, 6 \ \textrm{m})$ in 3D Cartesian coordinate system, respectively. It is worth emphasizing that the channel between Tx and Rx is completely blocked and communication is possible only through a cascaded channel via an RIS. As depicted in \cite{akdeniz2014millimeter}, in a dense urban area and for a carrier frequency of $28$ GHz, the channel is almost LOS-dominant when the distance is less than $25$ m. On the other hand, if the distance between transmit and receive terminal is more than $100$ m, it is shown that the channel is NLOS-dominant. Therefore, in this paper, we assume that the Tx-RIS channel, which is a short-range link, is LOS-dominant, while the long-range link between the RIS and Rx is NLOS-dominant. Fig. \ref{fig:considered scenario} depicts a scheme of considered scenario.
The path loss model is as follows \cite{akdeniz2014millimeter}:
\vspace{-1ex}
\begin{equation}\label{eq:pathloss model}
    PL(d) \ [dB] = a + 10 b \log_{10} (d) + \xi,
    \vspace{-1ex}
\end{equation}
where $\xi\sim \mathcal{N}(0, \sigma_{\xi}^2)$, and $d$ is the distance between the Tx and RIS (or the RIS and Rx). The parameters $a$, $b$, and $\sigma_{\xi}$ are given according to Table \ref{tab:System parameters} which are collected from experimental measurements at $28$ GHz carrier frequency\footnote{In (\ref{eq:pathloss model}), $\xi$ is a random variable that describes large-scale shadow fading with a zero mean Gaussian distribution and standard deviation $\sigma_{\xi}$.} \cite{akdeniz2014millimeter}. In this paper, the noise power spectral density (PSD) is assumed to be $-174$ dBm/Hz \cite{9390210}, and the bandwidth ($BW$) is supposed to be $100$ MHz \cite{etsitr138901}; hence, the noise power can be calculated as  $\sigma^2 = \textrm{PSD} + 10\log_{10}(BW) = -94$ dBm.

Regarding the channel $\mathbf{R}$, we assume that there are $C_R = 8$ clusters in the mmWave propagation environment; and in each cluster, there exists $L_R = 10$ paths \cite{liu2020machine, el2014spatially}. Each cluster has departure/arrival azimuth and elevation mean angles ($\phi_m^t, \theta_m^t, \varphi_m^r, \vartheta_m^r$), which are distributed uniformly in an interval of $[0,2\pi)$ and $[0,\pi)$, respectively; and the angular spread (standard deviation) of departure/arrival of both azimuth and elevation ($\sigma_{\phi_m^t}, \sigma_{\theta_m^t}, \sigma_{\varphi_m^r}, \sigma_{\vartheta_m^r}$) domains are supposed to be $7.5^\circ$. Within each cluster, the azimuth and elevation angles corresponding to each path follow Laplacian distribution \cite{liu2020machine, el2014spatially}. The transmitter and receiver antenna gains ($G_t$ and $G_r$) are assumed to be $24.5$ dBi \cite{wang2020joint, ning2021terahertz}. Table \ref{tab:System parameters} shows a summary of the assumed values of the system parameters. 

\vspace{-3ex}

\begin{figure}
    \centering
    \includegraphics[scale = 0.4]{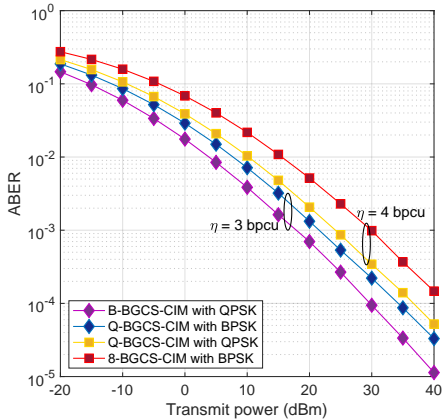}
    \caption{ABER performance for different CIM orders.}
    \label{fig:CIM_EuclideanDistance}
\end{figure}

\subsection{Analytical validation and benchmark comparison}
\vspace{-0.5ex}
In this subsection, we validate our simulations with the derived analytical upper bound and compare the proposed scheme with four benchmarks, i.e., simple CIM \cite{raeisi2022cluster}, SSM \cite{ding2017spatial, jiang2021generalized}, OMP \cite{el2014spatially}, and RCS, for the RIS-assisted mmWave systems. It is worth noting that the simple CIM is designed to index the clusters in a sparse channel between transmit and receive terminals. Since we need to index the clusters in the RIS-Rx channel, we apply the simple CIM in that channel. However, BGCS-CIM creates a synergy between the CIM scheme and the RIS technology. In the BGCS-CIM scheme, we consider the entire cascaded channel in the construction procedure of the codebook. 
Fig. \ref{fig:CIMvsSSM} illustrates the ABER performance versus the transmit power ($P$) for four different signaling schemes: (a) binary IM (B-IM) with binary phase-shift keying (BPSK), (b) B-IM with quadrature PSK (QPSK), (c) quadrature IM (Q-IM) with BPSK, and (d) Q-IM with quadrature PSK (QPSK). It is worth mentioning that, spatial information domain includes one bit and two bits for B-IM and Q-IM, respectively.
As shown in the Fig. \ref{fig:CIMvsSSM}, the analytical expression is an upper bound on the BGCS-CIM scheme, which validates the simulation results\footnote{The gap between the upper bound and simulation results increases with the increasing spectral efficiency. Since the upper bound is an approximation, with increasing the spectral efficiency, the accuracy of the approximation decreases. Nonetheless, for higher transmit powers, we will see that the theoretical upper bound approaches the simulation curve.}.
The simulation results reveal that BGCS-CIM scheme can outperform the benchmark schemes. Regarding Benchmark 1, i.e., simple CIM, at first glance, it seems that we should have a similar performance because the Tx-RIS LOS channel is shared among all the clusters in the RIS-Rx channel. However, simulation results reveal that the BGCS-CIM scheme can boost ABER performance significantly. These results prove the importance of the entire cascaded channel in constructing an effective CIM codebook. On the other hand, the superiority of CIM-based schemes, i.e., simple CIM and BGCS-CIM, over SSM is due to the inherent characteristic of CIM-based schemes in indexing clusters instead of paths. In other words, in the SSM scheme, under realistic circumstances of correlation among paths, two indexing paths can be near to each other, which results in power leakage in unfavorable paths; while by indexing clusters instead of paths, we can reduce the possibility of indexing highly correlated paths.

\begin{figure}
    \centering
    \includegraphics[scale = 0.35]{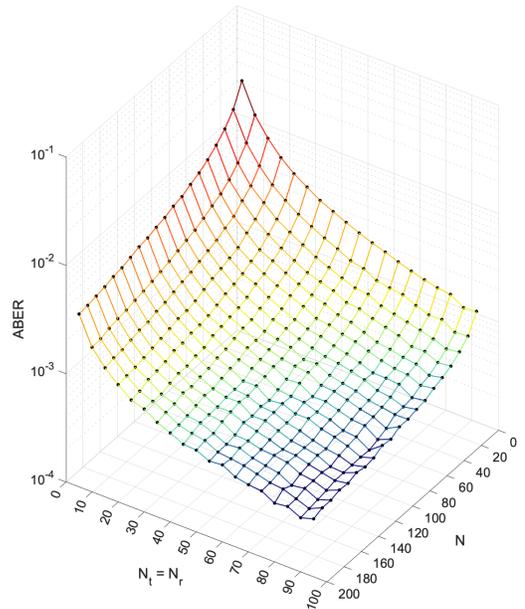}
    \caption{ABER for different numbers of antenna elments and RIS reflectors for B-BGCS-CIM with QPSK signaling.}
    \label{fig:ABER of different array sizes}
\end{figure}

\begin{figure*}
    \centering
    \includegraphics[scale = 0.27
    ]{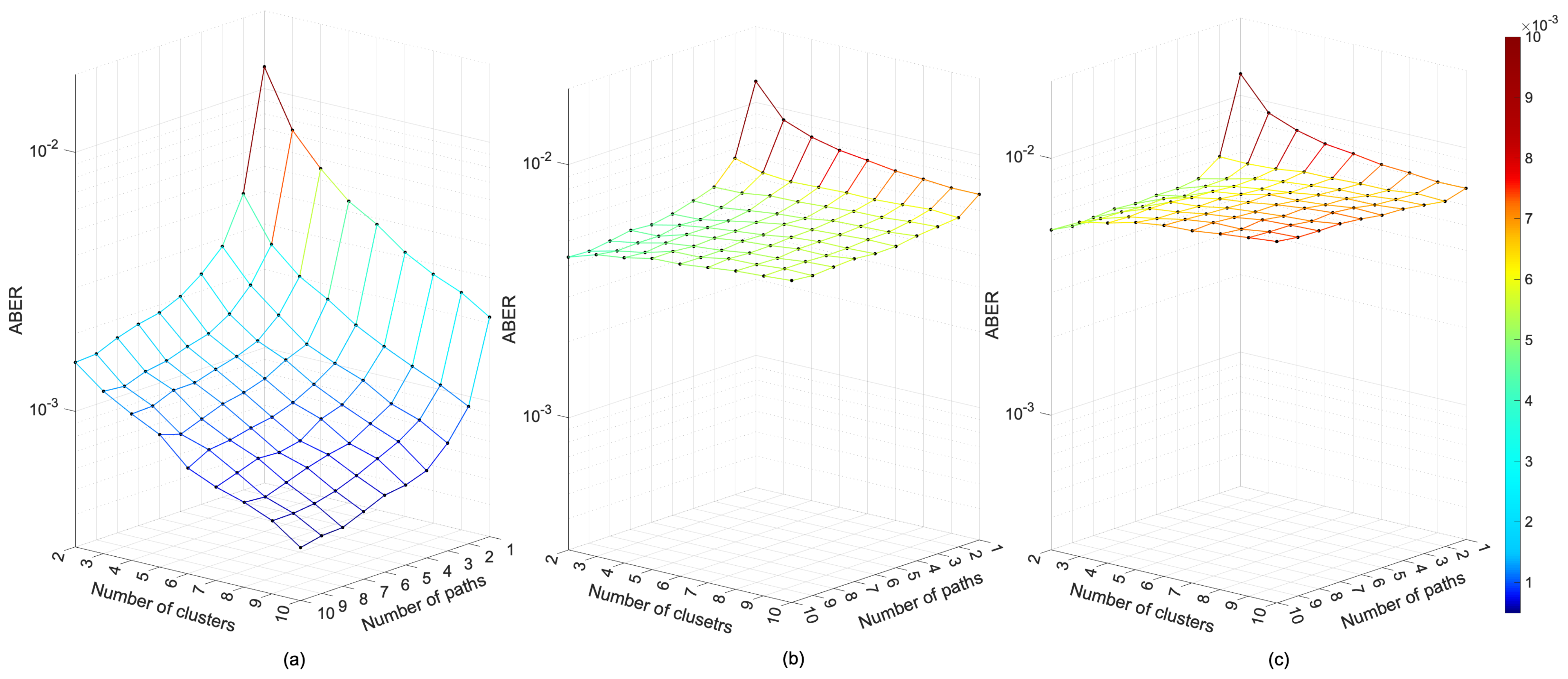}
    \caption{ABER performance for different IM schemes with respect to different channel sparsity: (a) B-BGCS-CIM with QPSK, (b) B-CIM with QPSK, (c) B-SSM with QPSK.}
    \label{fig:ABER_ChannelSparsity3D}
    \vspace{-3ex}
\end{figure*}

\begin{figure}
    \centering
    \includegraphics[scale = 0.43]{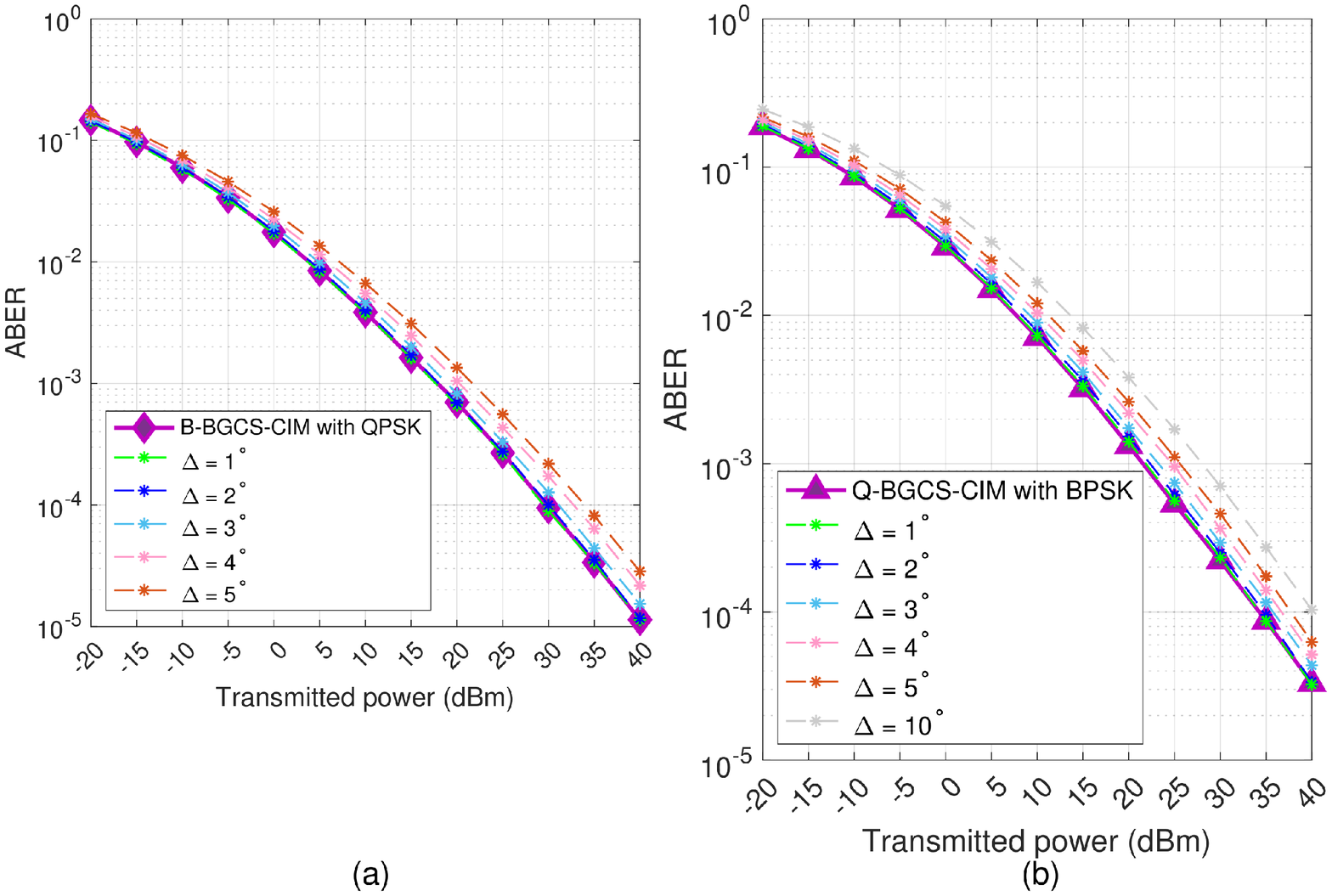}
    \caption{ABER performance degradation with respect to the angular information change for B-BGCS-CIM with QPSK signaling.}
    \label{fig:Robustness}
\end{figure}

We also consider different CIM orders to investigate the effect of Euclidean distance among the indexed clusters. As illustrated in Fig. \ref{fig:CIM_EuclideanDistance}, the increasing number of CIM bits results in degraded ABER performance. This performance loss is due to the shorter Euclidean distance between CIM symbols in Q-BGCS-CIM as compared with B-BGCS-CIM and 8-BGCS-CIM as compared with Q-BGCS-CIM, for the spectral efficiency of 3 and 4 bpcu, respectively. Besides, for higher orders of CIM symbols, spreading error\footnote{When the CIM symbol is detected erroneously, the detector resolves the signal received from the corresponding cluster; consequently, the detected signal does not carry any information unless noise. In other words, in this situation, the detector tries to resolve noise which leads to erroneous $M$-ary symbol detection as well. This phenomenon is called spreading error from the IM symbol to the $M$-ary symbol.} from the CIM symbol to $M$-ary symbol is more likely, which also contributes to the deteriorated ABER performance.

\vspace{-3ex}

\subsection{BGCS-CIM performance in S-V channel model}
\vspace{-0.5ex}
This subsection studies the proposed scheme performance for different numbers of antennas/reflectors and channel sparsity. Fig. \ref{fig:ABER of different array sizes} plots the ABER versus the numbers of antenna elements and reflectors for B-BGCS-CIM with QPSK signaling and transmit power $P = 20$ dBm. With the increasing number of antenna elements and reflectors, the ABER performance improves due to narrower beams, higher array directivity, and lower radiation sidelobes. These characteristics of larger arrays decrease the power leakage into undesirable indexed clusters, resulting in less inter-cluster interference. Hence, the detector can resolve CIM symbols more accurately, and the effect of spreading errors from CIM symbols into $M$-ary symbols is limited. As we mentioned earlier, increasing array size results in higher array directivity, narrower half-power bandwidth (HPBW), and higher side lobe level (SLL). The values for these parameters are summarized in Table \ref{tab:Array charactristics}. Higher array directivity gain results in more SINR, while with narrower HPBW (or beams), most of the transmitted signal concentrates in the desired cluster, resulting in less interference in the undesired clusters. On the other hand, higher SLL also prohibits wasting signal power on the undesired directions. All of these benefits from larger arrays result into ABER performance improvement as illustrated in Fig. \ref{fig:ABER of different array sizes}.

Fig. \ref{fig:ABER_ChannelSparsity3D} illustrates the ABER performance of different IM schemes under the effect of channel sparsity for one-bit IM with QPSK signaling. Fig. \ref{fig:ABER_ChannelSparsity3D}(a) indicates that the proposed BGCS-CIM scheme performs better with increasing paths and clusters since the detector can resolve the received signal more precisely in rich scattered channels. More specifically, with the increasing number of paths in each cluster, BGCS-CIM exploits this enriching in the channel as spatial diversity. On the other hand, increasing the number of clusters gives more opportunity to the BGCS-CIM scheme to index the clusters with higher gains. Consequently, increasing channel sparsity boosts ABER performance in the proposed BGCS-CIM scheme. However, this performance improvement is not observable in the simple CIM and SSM schemes. As explained earlier, in the simple CIM scheme, $\mathbf{G}$ is not involved in the indexing algorithm, while it is shared among all the indexed clusters. Specifically, $\mathbf{G}$ is a LOS channel and does not play any role in the illustrated channel sparsity. Nevertheless, the comparison between Figs. \ref{fig:ABER_ChannelSparsity3D}(a) and \ref{fig:ABER_ChannelSparsity3D}(b) perfectly shows the superiority of BGCS-CIM over simple CIM and indicates that the BGCS-CIM scheme owes its superiority to counting the entire cascaded channel in the codebook construction procedure. On the other hand, Fig. \ref{fig:ABER_ChannelSparsity3D}(c) shows the worst ABER performance of the SSM scheme among its counterparts. Since the SSM scheme indexes the paths without considering the clusters, by increasing channel sparsity, i.e., the number of paths and clusters, there is more potential correlation among the indexed paths.

\begin{table}
    \centering
    \caption{Array characteristics for different array sizes at $0^\circ$ azimuth and $0^\circ$ elevation.}
    \vspace{-2ex}
    \label{tab:Array charactristics}
    \begin{tabular}{c c c}
        \toprule
        \textbf{Array size} & \textbf{Parameter} & \textbf{Values} \\
        \toprule
        \multirow{3}{*}{$8 \times 8$}& Array directivity & $19.74$ dBi  \\
        & HPBW & $12.80^\circ$ Az; $14^\circ$ El\\
        & SLL & $12.80$ dB Az; $12.80$ dB El\\
        \midrule
        \multirow{3}{*}{$12 \times 12$}& Array directivity & $23.34$ dBi  \\
        & HPBW & $8.48^\circ$ Az; $10^\circ$ El\\
        & SLL & $13.06$ dB Az; $13.07$ dB El\\
        \midrule
        \multirow{3}{*}{$20 \times 20$}& Array directivity & $27.85$ dBi  \\
        & HPBW & $5.08^\circ$ Az; $6^\circ$ El\\
        & SLL & $13.19$ dB Az; $13.26$ dB El\\
        \bottomrule
    \end{tabular}
\end{table}

This paper assumes that the channel state information remains the same during a coherent block\footnote{In realistic channel conditions, CSI is divided into two parts, slow and fast time-varying CSI \cite{9086460}. While angular information is slow time-varying information, path gains are fast time-varying CSI. In a realistic channel, angular information changes after several time intervals while the path gains change from one interval to another. However, in this paper, for the sake of simplicity, we ignore this characteristic and assume that both angular information and path gains change from one interval to another \cite{jiang2021generalized, wang2020joint}.}. Nonetheless, in realistic conditions, the angular information slowly changes in time. This change in angular information can cause performance degradation in the proposed scheme. Fig. \ref{fig:Robustness} depicts the performance degradation with respect to the angular information change for B-BGCS-CIM with QPSK signaling. As it is illustrated in Fig. \ref{fig:Robustness}, for $\Delta = 1^{\circ}$ and $\Delta = 2^{\circ}$ changes in the cluster position, the performance is almost the same, while for $\Delta = 5^{\circ}$ changes in the cluster location, the performance degrades by about $4$ dB. It is worth mentioning that the angular spread of each cluster is assumed to be $7.5^{\circ}$; hence, $2^{\circ}$ change in the cluster location is equivalent to more than $26 \%$ of the angular spread, while $5^{\circ}$ is equivalent to more than $66 \%$ of the angular spread. In other words, the BGCS-CIM scheme can withstand about $26 \%$ of the angular spread variation in the cluster location without specific ABER performance degradation.

Ultimately, we analyze the computational complexity of our detector. First, the detector should calculate the effective channel $\mathbf{H_{eff}} = \mathbf{R} \mathbf{\Psi}\mathbf{G}$ which requires $\sim N^2N_t + NN_tN_r$ operations, where $\sim N^2N_t$ operations should be done to calculate $\mathbf{\Psi}\mathbf{G}$ and $\sim NN_tN_r$ operations requires to obtain $\mathbf{H_{eff}}$ completely. Next, the transmit beamformer should be multiplied in the effective channel, i.e., $\mathbf{H_{eff}} \mathbf{f}_t$, which requires $\sim N_r N_t$ operations. The whitening combiner $\mathbf{f}_r$ should be multiplied to build $\mathbf{f}_r^H \mathbf{H}_{eff} \mathbf{f}_t$ which needs $\sim N_r$ operations. This procedure should be repeated $MB$ times where $M$ and $B$ are the orders of $M$-ary and CIM constellations, respectively. Hence, the detector has a complexity order of $\mathcal{O}\big(MB(N^2N_t + N N_t N_r + N_t N_r + N_r)\big)$. By ignoring the lower-order terms $N_tN_r + N_r$, the complexity expression can be simplified as $\mathcal{O} \big(MBNN_t(N + N_r)\big)$.

\vspace{-3ex}
\section{Conclusion} \label{Sec: Conclusion}
\vspace{-1.5ex}
This paper has proposed a novel transmission mechanism for RIS-assisted mmWave systems named BGCS-CIM. We have considered a practical RIS-assisted system model, in which CIM has been delicately integrated to enhance system performance. The simulation results have revealed the superiority of the proposed scheme over the existing benchmarks. We have also derived an analytical upper bound for the proposed scheme, validating our simulations. Maximizing the Euclidean distance among indexed clusters can help the system to further boost performance due to minimizing inter-cluster interference, which we will investigate in our future work.

\vspace{-3ex}
\appendices
\section{Proof of Lemma \ref{lemm: Whitening Filter}}
\label{Appndx: Proof of whitening filter}
\vspace{-1ex}
Different whitening procedures with benefits and detriments are discussed in \cite{kessy2018optimal}. We prefer standardized zero-phase components analysis (ZCA-cor) whitening transformation due to its unique ability to keep the whitened noise $\mathbf{F}_r \mathbf{n}$ maximally similar to the effective noise $\mathbf{W}^H \mathbf{n}$ \cite{kessy2018optimal}. As a result, while we whiten the effective noise, we keep it as similar as possible to its colored version.

\vspace{-3ex}
\section{Proof of Lemma \ref{PEP for correct cluster index detection}}
\label{Proof of PEP for correct cluster index detection}
\vspace{-1ex}
Assume that $\mathbf{x}_1$ and $\mathbf{x}_2$ are two complex vectors with the same dimension. It is straightforward to prove that the following equation is correct.
\begin{equation}\label{eq:square of norm of sum}
    || \mathbf{x}_1 + \mathbf{x}_2 ||^2 = ||\mathbf{x}_1||^2 + ||\mathbf{x}_2||^2 + 2 \Re \{ \mathbf{x}_1^H \mathbf{x}_2 \}.
\end{equation}
Using (\ref{eq:square of norm of sum}), we can update (\ref{eq: CPEP correct CIM detection}) as follows:
\begin{equation}
\begin{split}
    \mathbb{P} (\{ c^*,s^* \} \rightarrow & \{ c^*, \hat{s} \} | \alpha_0, \beta_{1,1}, \dots, \beta_{C_R, L_R}) \\
    & = \mathbb{P} (2 \Re \{ \sqrt{P} \mathbf{n}^H \mathbf{f}_{r,c^*} \mathbf{f}_{r,c^*}^H \mathbf{H_{eff}^{\mathbf{b^*}}} \mathbf{f}_t (s^* - \hat{s}) \} \\
    & + |\sqrt{P} \mathbf{f}_{r,c^*}^H \mathbf{H_{eff}^{\mathbf{b^*}}} \mathbf{f}_t (s^* - \hat{s})|^2 < 0).
\end{split}
\end{equation}
Let us define $\zeta = 2 \Re \{ \sqrt{P}  \mathbf{n}^H \mathbf{f}_{r,c^*} \mathbf{f}_{r,c^*}^H  \mathbf{H_{eff}^{\mathbf{b^*}}} \mathbf{f}_t (s^* - \hat{s}) \} + |\sqrt{P} \mathbf{f}_{r,c^*}^H \mathbf{H_{eff}^{\mathbf{b^*}}} \mathbf{f}_t (s^* - \hat{s})|^2$.
Since $\mathbf{n}^H$ has complex normal distribution with variance $\sigma^2$, its real part has normal distribution with variance $\frac{\sigma^2}{2}$, i.e., $\Re \{ \mathbf{n}^H \} \sim \mathcal{N}(0,\frac{\sigma^2}{2})$. Hence, it is easy to see that $\zeta$ has normal distribution with mean $\mu_{\zeta}$ and variance $\sigma_{\zeta}$, i.e., $\zeta \sim \mathcal{N}(\mu_{\zeta}, \sigma_{\zeta}^2)$, wherein
\begin{equation*}
    \mu_{\zeta} = |\sqrt{P} \mathbf{f}_{r,c^*}^H \mathbf{H_{eff}^{\mathbf{b^*}}} \mathbf{f}_t (s^* - \hat{s})|^2,
\end{equation*}
\begin{equation*}
    \sigma_{\zeta}^2 = 2 \sigma^2 ||\sqrt{P} \mathbf{f}_{r,c^*} \mathbf{f}_{r,c^*}^H \mathbf{H_{eff}^{\mathbf{b^*}}} \mathbf{f}_t (s^* - \hat{s})||^2.
\end{equation*}
Subsequently, we can derive the CPEP as 
\begin{equation}
    \begin{split}
    \mathbb{P} (\{ c^*,s^* \} \rightarrow & \{ c^*, \hat{s} \} | \alpha_0, \beta_{1,1}, \dots, \beta_{C_R, L_R} ) \\
    & = Q \Big( \frac{| \mathbf{f}_{r,c^*}^H \mathbf{H_{eff}^{\mathbf{b^*}}} \mathbf{f}_t (s^* - \hat{s}) |^2}{\sqrt{2} \sigma || \mathbf{f}_{r,c^*} \mathbf{f}_{r,c^*}^H \mathbf{H_{eff}^{\mathbf{b^*}}} \mathbf{f}_t (s^* - \hat{s}) ||} \Big).
\end{split}
\end{equation}
Finally, by taking expectation over enough number of channel realizations, the $\mathrm{UPEP}$ expression can be obtained as in (\ref{eq:correct CIM UPEP}).

\begin{figure*}[t]
    \begin{equation}\label{CEPE proof for errouneous CIM detection}
        \begin{split}
            \mathbb{P} (\{ c^*, & s^* \} \rightarrow  \{ \hat{c}, \hat{s} \} | \alpha_0, \beta_{1,1}, \dots, \beta_{C_R, L_R}) = \mathbb{P} (\chi_1 > \chi_2) =  \int_{0}^{\infty}\int_{x_2}^{\infty} f_{\chi_1,\chi_2}(x_1,x_2) d x_1 d x_2
            = \frac{1}{4} \int_{0}^{\infty} \exp{(-\frac{x_2 + \Lambda}{2})} \\
            & \times \sum_{i = 0}^{\infty} \frac{(\frac{\Lambda x_2}{4})^i}{(i!)^2} \bigg \{ \int_{x_2}^{\infty} \exp{(-\frac{x_1}{2})} d x_1 \bigg \} d x_2 = \frac{\exp{(-\frac{\Lambda}{2})}}{2} \int_{0}^{\infty} \exp{(-x_2)} \sum_{i = 0}^{\infty} \frac{(\frac{\Lambda x_2}{4})^i}{(i!)^2} d x_2 =  \frac{\exp{(-\frac{\Lambda}{2})}}{2} \sum_{i = 0}^{\infty} \bigg \{ \frac{(\frac{\Lambda}{4})^i}{(i!)^2} \\
            & \times \int_{0}^{\infty} x_2^i \exp{(-x_2)} d x_2   \bigg \} = \frac{\exp{(-\frac{\Lambda}{2})}}{2}  \sum_{i = 0}^{\infty} \frac{(\frac{\Lambda}{4})^i}{i!} = \frac{\exp{(-\frac{\Lambda}{4})}}{2} = \frac{1}{2} \exp{\Big( - \frac{|P\mathbf{f}_{r,\hat{c}}^H  (\mathbf{H_{eff}^{b^*}} s^* - \mathbf{H_{eff}^{\hat{b}}} \hat{s} )  \mathbf{f}_t |^2}{2 \sigma^2} \Big)}.
        \end{split}
    \end{equation}
    \hrulefill
    \vspace{-4ex}
\end{figure*}

\vspace{-3ex}
\section{Proof of Lemma \ref{PEP for erroneous cluster index detection}}
\label{Proof of PEP for erroneous cluster index detection}
\vspace{-1ex}
Let us define 
\begin{equation}
    \chi_1 = \frac{|\mathbf{f}_{r,c^*}^H \mathbf{n}|^2}{\sigma^2/2},
\end{equation}
\begin{equation}
    \chi_2 = \frac{|\mathbf{f}_{r,\hat{c}}^H \mathbf{n}
    + \sqrt{P}\mathbf{f}_{r,\hat{c}}^H (\mathbf{H_{eff}^{b^*}} s^* - \mathbf{H_{eff}^{\hat{b}}} \hat{s} ) \mathbf{f}_t|^2}{\sigma^2/2},
\end{equation}
where $\chi_1$ and $\chi_2$ are central and non-central chi-squared random variables with two degrees of freedom, respectively. Notice that, $\chi_2$ has a non-centrality parameter as:
\begin{equation}
    \Lambda = \frac{2|\sqrt{P}\mathbf{f}_{r,\hat{c}}^H (\mathbf{H_{eff}^{b^*}} s^* - \mathbf{H_{eff}^{\hat{b}}} \hat{s} ) \mathbf{f}_t|^2}{\sigma^2}.
\end{equation}
The PDF of central and non-central random variable with $i$ degrees of freedom and non-centrality parameter of $\Lambda$ is as follows, respectively:

\begin{equation}
    f_{\chi_1}(x_1) = \frac{1}{2^{\frac{i}{2}} \Gamma(\frac{i}{2})} x_1 ^ {\frac{i}{2} - 1} \exp{(-\frac{x_1}{2})},
\end{equation}
\begin{equation}
    f_{\chi_2}(x_2) = \frac{1}{2} \exp{(-\frac{x_2 + \Lambda}{2})} (\frac{x_2}{\Lambda})^{\frac{k}{4} - \frac{1}{2}} I_{\frac{k}{2} - 1}(\sqrt{\Lambda x_2}),
\end{equation}
where $I_{\nu}(y)$ is the Bessel function of the first kind and it is defined as:
\begin{equation}
    I_{\nu}(y) = (\frac{y}{2})^{\nu} \sum_{i = 0}^{\infty} \frac{{(y^2/4)}^i}{i! \Gamma(\nu + i + 1)}.
\end{equation}

Accordingly, for two degrees of freedom, the PDF of $\chi_1$ and $\chi_2$ can be obtained as follows, respectively:
\begin{equation}\label{PDF of chi1}
    f_{\chi_1}(x_1) = \frac{1}{2} \exp{(-\frac{x_1}{2})}.
\end{equation}
\begin{equation}\label{PDF of chi2}
    f_{\chi_2}(x_2) = \frac{1}{2} \exp{(-\frac{x_2 + \Lambda}{2})} \sum_{i = 0}^{\infty} \frac{(\frac{\Lambda x_2}{4})^i}{(i!)^2}.
\end{equation}

Thanks to the whitening filter, $\chi_1$ and $\chi_2$ are independent; therefore, we can constitute their joint PDF as follows:
\begin{equation}
    \begin{split}
       f_{\chi_1,\chi_2}(x_1,x_2) & = f_{\chi_1}(x_1) f_{\chi_2}(x_2)\\
       & = \frac{1}{4} \exp{(-\frac{x_1 + x_2 + \Lambda}{2})} \sum_{i = 0}^{\infty} \frac{(\frac{\Lambda x_2}{4})^i}{(i!)^2}. 
    \end{split}
\end{equation}
Subsequently, we can derive the CPEP expression for the case of erroneous cluster index detection as (\ref{CEPE proof for errouneous CIM detection}) on top of this page. It is worth mentioning that during solving process of (\ref{CEPE proof for errouneous CIM detection}), we used Maclaurin series for $e^x$, i.e., $\sum_{i=0}^{\infty}\frac{x^i}{i!} = e^x$, and $\int_{0}^{\infty}x^i e^{-x} dx = i!$; the latter can be obtained by extending the factorial to a continuous function using definition of Gamma function.
Ultimately, by taking expectation over enough channel realizations, we can obtain the $\mathrm{UPEP}$ as (\ref{UPEP for errouneous index detection}).

%\section*{Acknowledgment}

%The authors would like to thank...

\ifCLASSOPTIONcaptionsoff
  \newpage
\fi

\vspace{-2.5ex}

\bibliographystyle{ieeetr}
\bibliography{reference.bib}

%\begin{IEEEbiography}{Michael Shell}
%Biography text here.
%\end{IEEEbiography}

% if you will not have a photo at all:
%\begin{IEEEbiographynophoto}{John Doe}
%Biography text here.
%\end{IEEEbiographynophoto}

%\begin{IEEEbiographynophoto}{Jane Doe}
%Biography text here.
%\end{IEEEbiographynophoto}

\end{document}